\def\BibTeX{{\rm B\kern-.05em{\sc i\kern-.025em b}\kern-.08em
    T\kern-.1667em\lower.7ex\hbox{E}\kern-.125emX}}
\newtheorem{theorem}{Theorem}
\newtheorem{lemma}{Lemma}
\newtheorem{proposition}{Proposition}
\DeclareSIUnit\Byte{\text{Byte}}
\newcommand{\linebreakand}{%
\end{@IEEEauthorhalign}%
\hfill%
\mbox{}%
\par%
\relax%
\mbox{}%
\hfill%
\begin{@IEEEauthorhalign}%
}
\newcommand{\Problem}[1]{\textsc{#1}}
\newcommand{\Kilo}{\ensuremath{\mathrm{K}}}
\newcommand{\Mega}{\ensuremath{\mathrm{M}}}
\newcommand{\TabLabel}[1]{\label{tab:#1}}
\newcommand{\Table}[1]{Table~\ref{tab:#1}}
\newcommand{\FigLabel}[1]{\label{fig:#1}}
\newcommand{\Figure}[1]{Fig.~\ref{fig:#1}}
\newcommand{\Figures}[2]{Figs.~\ref{fig:#1} and~\ref{fig:#2}}
\newcommand{\FiguresSub}[2]{Figs.~\ref{fig:#1},\,\subref{fig:#2}}
\newcommand{\SectLabel}[1]{\label{sect:#1}}
\newcommand{\Section}[1]{Sect.~\ref{sect:#1}}
\newcommand{\ThmLabel}[1]{\label{thm:#1}}
\newcommand{\Theorem}[1]{Theorem~\ref{thm:#1}}
\newcommand{\LemLabel}[1]{\label{lem:#1}}
\newcommand{\Lemma}[1]{Lemma~\ref{lem:#1}}
\newcommand{\PropLabel}[1]{\label{prop:#1}}
\newcommand{\ie}{i.e.}
\newcommand{\eg}{e.g.}
\newcommand{\etal}{et~al.}
\newcommand{\wrt}{w.r.t.}
\newcommand{\erclogowrapped}[1]{%
\setlength\intextsep{0pt}%
\begin{wrapfigure}[3]{r}{#1*\real{1.1}}%
\includegraphics[width=#1]{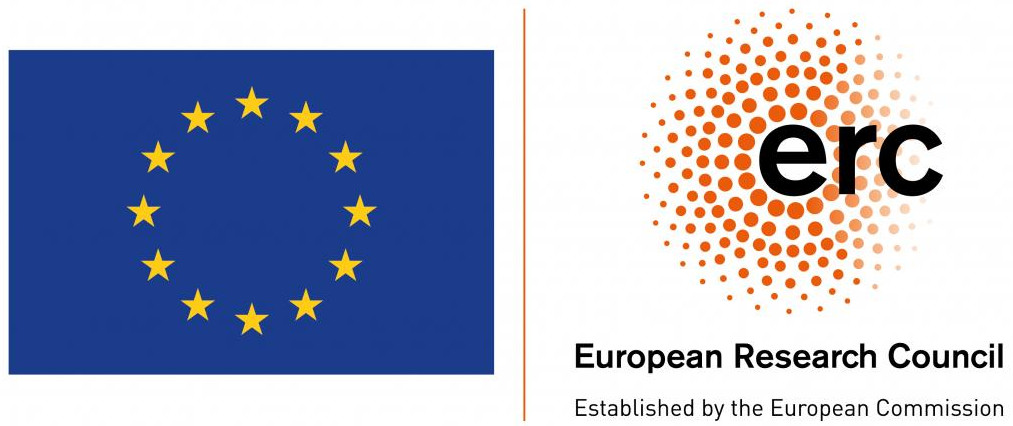}%
\end{wrapfigure}%
}
\newcommand{\N}{\ensuremath{\mathbb{N}}}
\newcommand{\Set}[1]{\ensuremath{\{#1\}}}
\newcommand{\Card}[1]{\ensuremath{\left| #1 \right|}}
\newcommand{\Range}[2]{\ensuremath{\left[ #1, #2 \right]}}
\newcommand{\Edge}[2]{\ensuremath{\{#1,#2\}}}
\newcommand{\Neigh}{\ensuremath{N}}
\newcommand{\Degree}{\ensuremath{\mathrm{deg}}}
\newcommand{\MaxDegree}{\ensuremath{\Delta}}
\newcommand{\Path}{\ensuremath{\mathcal{P}}}
\newcommand{\Matching}{\ensuremath{\mathcal{M}}}
\newcommand{\MatrixMult}{\ensuremath{\omega}}
\newcommand{\NP}{\ensuremath{\mathcal{NP}}}
\newcommand{\bigO}{\ensuremath{\mathcal{O}}}
\newcommand{\softO}{\ensuremath{\tilde{\mathcal{O}}}}
\newcommand{\eps}{\epsilon}
\DeclareMathOperator{\Sort}{sort}
\newcommand{\OPT}{\textsf{OPT}}
\newcommand{\BEST}{\textsf{BEST}}
\newcommand{\halfapprox}{$\frac{1}{2}$-approximation}
\newcommand{\halfalpha}{\ensuremath{\nicefrac{1}{2}}}
\newcommand{\twoaug}{2-augmentation}
\newcommand{\Demand}{\ensuremath{\mathcal{D}}}
\newcommand{\B}{\ensuremath{k}}
\newcommand{\MDM}{\textsc{DjM}}
\newcommand{\MDMlong}{\Problem{$\B$-Disjoint Matching}}
\newcommand{\MaxWeight}{\ensuremath{D}}
\newcommand{\DjMatching}{\ensuremath{{\mathcal{M}}}}
\newcommand{\MatchAlg}{\ensuremath{\mathcal{A}_\mathrm{M}}}
\newcommand{\AlgName}[1]{\texttt{#1}}
\newcommand{\ROMA}{\AlgName{ROMA}}
\newcommand{\Swaps}{\AlgName{Swaps}}
\newcommand{\LocalSwaps}{\AlgName{LocalSwaps}}
\newcommand{\GlobalSwaps}{\AlgName{GlobalSwaps}}
\newcommand{\Blossom}{\AlgName{Blossom}}
\newcommand{\BlossomIt}{\AlgName{Blossom-It}}
\newcommand{\GPA}{\AlgName{GPA}}
\newcommand{\GPAIt}{\AlgName{GPA-It}}
\newcommand{\lROMA}{l_{\ROMA}}
\newcommand{\GreedyIt}{\AlgName{Greedy-It}}
\newcommand{\BGreedyExt}{\AlgName{bGreedy\&Extend}}
\newcommand{\misragries}{\AlgName{MG}}
\newcommand{\nodecentered}{\AlgName{NodeCentered}}
\newcommand{\NCthresh}{\ensuremath{\theta}}
\newcommand{\NCmax}{\texttt{MAX}}
\newcommand{\NCavg}{\texttt{AVG}}
\newcommand{\NCmed}{\texttt{MEDIAN}}
\newcommand{\NCsum}{\texttt{SUM}}
\newcommand{\NCbsum}{\texttt{kSUM}}
\newcommand{\kEC}{\AlgName{k-EC}}
\newcommand{\kECcc}{\texttt{CC}}
\newcommand{\kEClc}{\texttt{LC}}
\newcommand{\kECrl}{\texttt{RL}}
\newcommand{\kEClf}{\texttt{LF}}
\newcommand{\Instance}[1]{\texttt{#1}}
\newcommand*{\subparagraph}[1]{\paragraph{#1}}
\title{Fast and Heavy Disjoint Weighted Matchings\\for
Demand-Aware Datacenter Topologies%
\thanks{
\erclogowrapped{5\baselineskip}
This project has received funding from the
European Research Council (ERC) under the European Union's Horizon 2020
research and innovation programme (Grant agreement No.\ 101019564
``The Design of Modern Fully Dynamic Data Structures (MoDynStruct)''
and
Grant agreement No.\ 864228
``Self-Adjusting Networks (AdjustNet)'')%
,
and from the
Austrian Science Fund (FWF) project ``Fast Algorithms for a Reactive Network
Layer (ReactNet)'', P~33775-N, with additional funding from the \textit{netidee SCIENCE
Stiftung}, 2020--2024.
}}
\begin{document}
\author{%
\thanks{%
\{kathrin.hanauer,\,monika.henzinger,\,jonathan.trummer\}@univie.ac.at,
stefan.schmid@tu-berlin.de

© 2022 IEEE. Personal use of this material is permitted. Permission from IEEE must be 
obtained for all other uses, in any current or future media, including 
reprinting/republishing this material for advertising or promotional purposes, creating new 
collective works, for resale or redistribution to servers or lists, or reuse of any copyrighted 
component of this work in other works.
}
\IEEEauthorblockN{%
Kathrin Hanauer
\orcidlink{0000-0002-5945-837X} 
}
\IEEEauthorblockA{\textit{Faculty of Computer Science}, \textit{University of Vienna},
Austria%
}
\and
\IEEEauthorblockN{%
Monika Henzinger
\orcidlink{0000-0002-5008-6530} 
}
\IEEEauthorblockA{\textit{Faculty of Computer Science},
\textit{University of Vienna},
Austria%
}%
\linebreakand%
\IEEEauthorblockN{%
Stefan Schmid
\orcidlink{0000-0002-7798-1711} 
}
\IEEEauthorblockA{\textit{Faculty of Computer Science},
\textit{University of Vienna}, Austria\\
and \textit{TU Berlin}, Berlin, Germany%
}
\and
\IEEEauthorblockN{%
Jonathan Trummer
\orcidlink{0000-0002-1086-4756} 
}
\IEEEauthorblockA{\textit{Faculty of Computer Science}, \textit{University of Vienna},
Austria%
}%
}
\maketitle
\begin{abstract}
Reconfigurable optical topologies promise
to improve the performance in datacenters by
dynamically optimizing the physical network in a demand-aware manner.
State-of-the-art optical technologies allow to establish and
update direct connectivity (in the form of \emph{edge-disjoint matchings})
between top-of-rack switches within
microseconds or less.
However, to fully exploit temporal structure in the demand,
such fine-grained reconfigurations also require fast algorithms
for optimizing the interconnecting matchings.

Motivated by the desire to offload a maximum amount of demand to the
reconfigurable network, this paper initiates the study of fast algorithms
to find $\B$ disjoint heavy matchings in graphs. We present and analyze 
six algorithms, based on iterative matchings, b-matching, edge coloring, and node-rankings.
We show that the problem is generally \NP{}-hard and
study the achievable approximation ratios.

An extensive empirical evaluation
of our algorithms on both real-world and synthetic traces (\num{88} in total),
including traces collected in
Facebook datacenters and in HPC clusters
reveals that all our algorithms provide high-quality matchings,
and also very fast ones come within \SI{95}{\percent} or more of the best solution. However, the
running times differ significantly and what is the best algorithm
depends on $\B$ and the acceptable runtime-quality tradeoff.
\end{abstract}
\begin{IEEEkeywords}
reconfigurable datacenters, optical circuit switches,
graph algorithms, matching, edge coloring
\end{IEEEkeywords}
\section{Introduction}\SectLabel{intro}

With the popularity of data-centric applications, network traffic in datacenters is growing explosively~\cite{singh2015jupiter,Roy2015InsideTS}. 
Accordingly, over the last years, several novel datacenter topologies have been proposed to improve network efficiency and performance, e.g.,~\cite{valadarsky2016xpander,guo2009bcube,singla2012jellyfish}. 
These network topologies typically have in common that they are oblivious to the traffic they serve.

Emerging reconfigurable optical technologies enable an intriguing alternative to existing datacenter network designs~\cite{osn21,zhang2021gemini,sigmetrics22cerberus}: these technologies allow to enhance existing datacenter networks with reconfigurable optical matchings, \eg, one disjoint matching per optical circuit switch~\cite{projector,mellette2017rotornet,helios,wang2010c,osa,ballani2020sirius,ancs18}. 
These matchings can be adapted towards the traffic demand, exploiting temporal and spatial structure~\cite{sigmetrics20complexity}. 
State-of-the-art technologies allow in principle to change such matchings within microseconds or even less~\cite{ballani2020sirius}. Given these reconfiguration times, the bottleneck becomes now how to \emph{compute} such matchings fast in the control plane.

Accordingly, this paper initiates the study of fast algorithms to find $\B$ disjoint weighted matchings in graphs. Here, $\B$ is the number of optical circuit switches, each of which provides one reconfigurable matching. 
The matchings should be heavy, \ie, carry a maximal amount of traffic.
Existing algorithmic work on the matching problem typically focuses on computing a single matching. 

More formally, this paper considers the following model and terminology. We
are given a weighted graph $G=(V,E,\Demand)$ where $\Demand: V \times V \rightarrow \N_0$ describes
a \emph{demand} for each pair of vertices.
A \emph{matching} is a subset of edges that have no common vertices.
The goal is to find $\B$ pairwise edge-disjoint matchings $\Matching_1, \dots,
\Matching_{\B}$, such that $\sum_{1 \leq i \leq \B} \Demand(\Matching_i)$ is
maximized.
We show that this problem is \NP{}-hard.

Note that this problem is different from the well-known $b$-matching problem: Given a triangle, a 2-matching could choose all edges in the graph, while 2 disjoint matchings consist of 2 edges of the graph.

Perhaps a natural approach to find such maximum weight $\B$ disjoint matchings would be to repeatedly
remove the edges of individual maximum weight matchings from the graph, leading to a polynomial-time algorithm.
However, this approach is not optimal:
Consider a triangle where each vertex is additionally adjacent to a further vertex of degree one.
Thus, the graph has six vertices and six edges.
Now assume unit edge weights and consider $\B=3$.
The unique maximum weight matching consists of exactly the three edges incident to the degree-one vertices.
By removing them we are left with a triangle, and hence the second maximum weight matching has size one, as does the third.
An optimal solution, however, has size six: Each of the three matchings consists of one triangle edge plus the unique edge
incident to the remaining triangle vertex.

\paragraph*{Contributions}
Motivated by novel optical technologies which allow to enhance fixed datacenter topologies with reconfigurable matchings, this paper studies algorithms for the fast computation of $\B$ heavy disjoint matchings.
We show that the problem is \NP{}-hard and propose six efficient algorithms:
three algorithms are based on the iterative computation of simple matchings,
one algorithm leverages a connection to the related $b$-matching problem,
one algorithm is an adaptation of an edge-coloring algorithm,
and one node-centered algorithm
uses a rating function that depends on the weights of a node's incident edges. Additionally, we
study three postprocessing strategies to improve the weight of an existing matching,
and discuss the achievable approximation ratios.

We perform an extensive evaluation of the quality and running time
of our algorithms on a diverse set of instances, which include both
real-world traces as well as synthetic traces, \num{88} instances in total.
In particular, we consider six traces from Facebook datacenters,
four traces from a High-Performance Computing (HPC) cluster, three widely-used synthetic pFabric traces,
nine instances from the Florida sparse matrix collection, and 66 Kronecker graphs.

Our empirical results show that our algorithms constantly compute high-quality matchings.
The running times however vary and which is the best algorithm depends on the value of $\B$
and the affordable tradeoff between running time and solution quality.
For small values of $\B$, an iterative approach is most attractive, especially when combined
with a local swapping strategy: the algorithm \GPAIt{},
which is based on the \AlgName{Global Paths} matching algorithm,
combined with the \LocalSwaps{} postprocessing routine provides low running times
and high-quality matchings whose weight is within \SI{95}{\percent} or more of the best algorithm (executed with a \SI{4}{\hour} time limit).
For larger values of $\B$, our edge coloring-based algorithm \kEC{} provides the best performance
as its running time barely increases with $\B$; for $\B \geq 4$, its quality
score is always at least within \SI{93}{\percent} of the best algorithm, and at
least \SI{96}{\percent} on average.
If running time is not of concern, the iterative algorithm \BlossomIt{} can be an attractive choice.

As a contribution to the research community,
to ensure reproducibility and to facilitate follow-up work,
we will make all our experimental artefacts including our implementation (as open source) publicly available together with this paper.

\paragraph*{Organization}
The remainder of this paper is organized as follows.
\Section{preliminaries} introduces preliminaries and discusses related problems and prior work.
We present our algorithms in \Section{algorithms} and report on our experimental results in \Section{experiments}.
We conclude by discussing future research directions in \Section{conclusion}.
\section{Preliminaries}\SectLabel{preliminaries}
\subsection{Basic Definitions}
We model our problem as a simple, undirected, edge-weighted graph
$G=(V,E,\Demand)$ with vertex set $V$, edge set $E$, and a non-negative integer
\emph{weight} or \emph{demand} $\Demand: V \times V \rightarrow \N_0$ for each pair of vertices,
which corresponds to the amount of communication (data flow) between them.
We assume the demand to be symmetric, \ie, $\Demand(u, v) = \Demand(v, u)$ for
all $u \ne v \in V$ and $\Demand(u, v) > 0$ iff $\Edge{u}{v} \in E$.
As shorthand notation, we also write $\Demand(e) = \Demand(u, v)$ for $e =
\Edge{u}{v}$.
Note that because $G$ is simple, $\Demand(v, v) = 0$ always.
As usual, $n = \Card{V}$ and $m = \Card{E}$.
Furthermore, we denote by $D = \max_{u,v \in V} \Demand(u, v)$ the maximum
demand.
An edge $e = \Edge{u}{v}$ is \emph{incident} to its \emph{end vertices} $u$ and
$v$, and $u$ and $v$ are said to be \emph{adjacent}.
The \emph{neighborhood} of a vertex $v$ is $\Neigh(v) = \Set{ u \mid
\Edge{u}{v} \in E}$ and its \emph{degree} is $\Degree(v) = \Card{\Neigh(v)}$.
We denote the \emph{maximum degree} by $\MaxDegree = \max_{v\in V} \Degree(v)$.
A \emph{path} $\Path$ is a sequence of edges $\Path = (e_1,e_2,\dots,e_k)$ such
that $e_i$ and $e_{i+1}$, $1 \leq i < k$, share a common end vertex and no
vertex appears more than once.

A \emph{matching} $\Matching \subseteq E$ is a set of edges such that no vertex is incident
to two edges contained in $\Matching$.
In our context,
the \emph{weight} of a matching $\Matching$ is
$\Demand(\Matching) = \sum_{e \in \Matching} \Demand(e)$.
An edge $e$ is said to be \emph{matching} if $e \in \Matching$ and
\emph{non-matching} otherwise.
A vertex is said to be \emph{free} (\wrt{}~$\Matching$) if it is not incident
to a matching edge.
An \emph{alternating path} is a path that alternatingly consists of matching
and non-matching edges.
Given a matching $\Matching$ and a path $\Path$, we denote by $\Matching \oplus
\Path$ the set of edges obtained as the symmetric difference of $\Matching$ and
the edges in $\Path$.
An \emph{augmenting path} is an alternating path $\Path$ such that $\Matching
\oplus \Path$ is a matching and $\Demand(\Matching) < \Demand(\Matching \oplus
\Path)$.
A \emph{$\B$-disjoint matching} $\DjMatching$ is a collection of $\B$ matchings
$(\Matching_1, \dots, \Matching_{\B})$ that are pairwise disjoint.
We slightly abuse notation and use $\DjMatching$ both for the collection of
disjoint matchings as well as their union $\Matching_1 \cup \dots \cup
\Matching_{\B}$.

In this paper, we study the \emph{\MDMlong{} ($\B$-\MDM)} problem:
Given a graph $G = (V, E, \Demand)$ and an integer $\B \in \N$, find a
$\B$-disjoint matching $\DjMatching = (\Matching_1, \dots, \Matching_{\B})$
such that
$\Demand(\DjMatching) = \sum_{1 \leq i \leq \B} \Demand(\Matching_i)$ is maximized.
If the maximum demand $D=1$, we speak of the \emph{unweighted} $k$-disjoint matching problem.
\subsection{Related Problems and Prior Work}\SectLabel{related}
Our problem is especially motivated by reconfigurable datacenter networks in which optical switches can be used to augment a given fixed (electrical) topology, typically a Clos topology~\cite{singh2015jupiter}, with additional matchings between the top-of-rack switches~\cite{helios,wang2010c,osa,projector,amir2021optimal,sigmetrics22cerberus,ballani2020sirius}. In prior work, these matchings are typically optimized individually and not for runtime~\cite{sigact19}. 
In this regard, our paper is also related to graph augmentation problems~\cite{gozzard2018converting,meyerson2009minimizing} where a given (fixed) graph needs to be enhanced with an optimal number of ‘‘extra edges’’, sometimes also referred to as ‘‘ghost edges’’~\cite{papagelis2011suggesting}: the objective in this literature is typically to provide small world properties~\cite{parotsidis2015selecting} or minimize the network diameter~\cite{bilo2012improved,demaine2010minimizing}. However, these algorithms are not directly applicable to our problem where we need to add entire matchings rather than individual edges.

The \Problem{Weighted Matching} problem essentially corresponds to the special
case of $1$-\MDM{}. %
Edmonds~\cite{edmonds_1965_cardinality,edmonds1965maximum} was the first to
give a polynomial-time algorithm, which is known as the
\emph{blossom algorithm} and has a running time of $\bigO(mn^2)$.
In a series of improvements~\cite{%
gabow1974,
DBLP:journals/jacm/Gabow76,%
lawler1976,%
DBLP:journals/siamcomp/GalilMG86,%
DBLP:journals/jcss/Gabow85,%
DBLP:journals/jacm/GabowGS89}, the running time
has been reduced further to $\bigO(n(m + n\log
n))$~\cite{DBLP:conf/soda/Gabow90} for the general case and to
$\bigO(m\sqrt{n\log n}\log(nD))$~\cite{DBLP:journals/jacm/GabowT91} for integer
weights. %
For dense graphs, the fastest algorithm on integer-weighted graphs is
randomized and runs in time
$\softO(Dn^\MatrixMult)$~\cite{DBLP:journals/jacm/CyganGS15}, where
$\MatrixMult$ is the exponent in the running time of fast matrix multiplication
algorithms ($\MatrixMult < 2.38$~\cite{LeGall14}).
A widely known simple greedy algorithm~(cf.~\Section{algorithms}) yields
a $\frac{1}{2}$-approximation and runs in time $\bigO(m \log n)$.
Both approximation ratio and running time have been subject to improvement over
the years%
, leading to a $(1-\eps)$-approximation algorithm with $\bigO(m/\eps
\log(1/\eps))$ running time for arbitrary edge weights and $\bigO(m/\eps \log
D)$ running time for integer weights%
~\cite{DBLP:journals/jacm/DuanP14}.

Different algorithms have been proposed and evaluated to tackle the problem in
practice.
Drake and Hougardy~\cite{DBLP:conf/wea/DrakeH03} experimentally compared the
already mentioned greedy algorithm to the \AlgName{LD} algorithm by
Preis~\cite{DBLP:conf/stacs/Preis99} and to the Path Growing Algorithm
(\AlgName{PGA})~\cite{DBLP:journals/ipl/DrakeH03} and showed that
a variant of \AlgName{PGA}, \AlgName{PGA'}, performs very well in practice.
Maue and Sanders~\cite{DBLP:conf/wea/MaueS07} later suggested the Global Paths
Algorithm (\GPA{}) and showed in an extensive study that in combination with a
postprocessing routine called \ROMA{}, it yields the best experimental results in
comparison to the simple greedy algorithm and \AlgName{PGA'}.

\Problem{(Weighted) Perfect Matching} is a restricted version that disallows free
vertices.
It can be solved in polynomial time by a variant of the blossom
algorithm~\cite{kolmogorov2009blossom}.

\Problem{(Weighted) $b$-Matching} is a generalization of \Problem{Weighted
Matching}, where each vertex may be incident to up to $b$ edges contained in
the matching.
In contrast to \MDMlong{}, a $b$-matching need not be composed of
$b$ pairwise disjoint $1$-matchings.
Thus, every $\B$-disjoint matching is a $\B$-matching, but not necessarily
vice-versa (the edges of a triangle, \eg, form a 2-matching but not a
2-disjoint matching).
This problem can be solved exactly in $\bigO(\min\{bn, n\log n\}(m+n\log
n))$ time~\cite{DBLP:journals/talg/Gabow18} and approximated by a greedy algorithm analogous to
the greedy weighted matching algorithm with a performance guarantee of
$\frac{1}{2}$~\cite{DBLP:conf/esa/Mestre06} (cf.~\Section{algorithms}).
A $(1+\eps)$-approximation can be achieved in $\bigO{}_\eps(m\alpha{}(m,n))$
time~\cite{DBLP:journals/corr/HuangP17b}.

Khan \etal{}~\cite{DBLP:journals/siamsc/KhanPPSSMHD16} compared the performance
of the $b$-matching variants of the simple greedy algorithm, \AlgName{PGA},
\AlgName{PGA'}, and \AlgName{LD} to their new algorithm \AlgName{b-SUITOR},
which computes the same solution as the greedy algorithm, but is parallelizable
and faster than \AlgName{PGA'}.
Recently, algorithms for the $b$-matching problem have been evaluated in the
online setting in a similar context of data center
reconfiguration~\cite{DBLP:journals/sigmetrics/BienkowskiFMS20}.

The \Problem{Edge Coloring} problem consists in determining the
\emph{chromatic index} of a graph, \ie, the minimum number of colors required
to assign each edge a color such that edges incident to a common vertex receive
different colors.
The \MDMlong{} problem is hence equivalent to finding a maximum-weight subgraph
with chromatic index $\B$.
Whereas $\MaxDegree$ naturally gives a lower bound on the chromatic index, an
upper bound is given by $\MaxDegree + 1$~\cite{vizing1964}.
In general, it is $\NP$-hard to decide whether a graph has chromatic index
$\MaxDegree$ or $\MaxDegree + 1$~\cite{DBLP:journals/siamcomp/Holyer81a}
already if the graph is cubic, i.e., $\Degree(v) = 3$ for all $v \in V$:
\begin{theorem}[Holyer~\cite{DBLP:journals/siamcomp/Holyer81a}]\ThmLabel{ec-np}
It is $\NP$-complete to determine whether the chromatic index of a cubic graph
is $3$ or $4$.
\end{theorem}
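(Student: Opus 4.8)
The plan is to establish the two halves of NP-completeness separately. Membership in $\NP$ is immediate: a proper $3$-edge-coloring is a certificate of size $\bigO(m)$, verifiable in polynomial time by checking that incident edges receive distinct colors; since a cubic graph has chromatic index $3$ or $4$ by Vizing's theorem~\cite{vizing1964}, the existence of such a certificate is exactly equivalent to having chromatic index $3$. The substance of the proof is the hardness direction, which I would obtain by a polynomial-time reduction from $3$-SAT: given a formula $\varphi$, construct a cubic graph $G_\varphi$ that is $3$-edge-colorable if and only if $\varphi$ is satisfiable.

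The reduction is gadget-based and leans on the algebraic structure of $3$-edge-colorings of cubic graphs (Tait's observation): identify the three colors with the three non-zero elements of $\mathbb{Z}_2 \times \mathbb{Z}_2$, so that the three colors around any vertex sum to $0$, and hence across any edge cut the colors on one side impose a parity constraint on the other. This is the mechanism by which a gadget ``transmits'' information. Because an individual edge carries one of three colors rather than a bit, truth values are encoded by a local pattern --- typically by whether the two edges of a designated pair crossing a cut get equal or unequal colors, or by which of two colors survives on a distinguished edge once a small fragment has been rigidified so that, up to color symmetry, only two colorings remain.

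Concretely I would build three kinds of pieces. A \emph{wire} carries one bit between its two endpoints and has exactly two proper colorings modulo symmetry, one per bit value; a \emph{negation} piece is the same but flips the bit. A \emph{variable gadget} for a variable with $t$ occurrences is a cubic fragment with $t$ wire-terminals that every $3$-edge-coloring forces to the same bit, so that a coloring of $G_\varphi$ induces a consistent truth assignment and vice versa; this is where a cubic ``fan-out'' has to be engineered. A \emph{clause gadget} has three wire-terminals, one per literal, and admits a proper $3$-edge-coloring if and only if at least one incoming bit is \emph{true}. Assembling $G_\varphi$ from one variable gadget per variable, one clause gadget per clause, and wires joining clause-literal terminals to the appropriate (possibly negated) variable terminals, and finally padding any residual low-degree boundary vertices with small fixed pieces to make the graph $3$-regular, one gets $|V(G_\varphi)| = \bigO(|\varphi|)$, so the reduction runs in polynomial time. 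Correctness is then a direct consequence of the gadget specifications: a satisfying assignment yields mutually consistent colorings of all gadgets and hence a $3$-edge-coloring of $G_\varphi$; conversely any $3$-edge-coloring restricts to gadget colorings that, by the variable gadgets, define a consistent assignment and, by the clause gadgets, satisfy every clause.

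The main obstacle is the gadget design itself: one must exhibit small cubic fragments (in the classical proof these are built from copies of $K_4$ and of doubled-edge / Möbius-type pieces) and prove that their set of proper $3$-edge-colorings is exactly the prescribed one. This is a finite case analysis, but a delicate one, and it must be carried out hand-in-hand with the bookkeeping that keeps every vertex --- including those at the interfaces between wires, variable gadgets, and clause gadgets --- of degree exactly $3$. Once these gadget lemmas are established, combining them into the reduction and verifying the equivalence is routine.
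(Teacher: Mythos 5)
The first thing to say is that the paper does not prove this statement: it is Holyer's theorem, imported by citation, so there is no in-paper proof to compare against. Your outline does follow the structure of Holyer's actual argument --- membership in $\NP$ via a proper $3$-edge-coloring as a polynomially verifiable certificate combined with Vizing's bound, and hardness via a reduction from $3$-SAT whose gadgets communicate through the parity structure of Tait colorings (colors as the nonzero elements of $\mathbb{Z}_2\times\mathbb{Z}_2$, truth values encoded by whether a designated pair of edges receives equal or distinct colors). That is the right skeleton, and your identification of where the difficulty lies is accurate.

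Precisely for that reason, however, the proposal has a genuine gap: the gadgets are never constructed and their coloring properties are never verified, and in Holyer's proof essentially \emph{all} of the work is there. One must exhibit the inverting component and prove, by case analysis driven by the parity invariant, both that every proper $3$-edge-coloring forces the prescribed equal/unequal patterns on its designated edge pairs and that every admissible pattern extends to a proper coloring; the variable and clause gadgets are then assembled from these components, and the ``coloring $\Leftrightarrow$ satisfying assignment'' equivalence is a consequence of those two lemmas rather than of the gadget \emph{specifications}, which is all you give. A second, smaller soft spot is the closing step of ``padding residual low-degree boundary vertices with small fixed pieces to make the graph $3$-regular'': attaching pieces to raise degrees is not innocuous, since it can change the set of $3$-edge-colorings (for instance, any cubic graph with a bridge has chromatic index $4$ by a parity argument), so cubicity has to be built into the gadgets and their interfaces from the start, as Holyer does, or the padding must itself be constructed and verified. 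In short: correct approach and correct high-level decomposition, but as written this is an annotated outline of Holyer's proof, not a proof.
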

\begin{proposition}\PropLabel{npc}
The \MDMlong{} problem is $\NP$-hard already in the unweighted
case and for $\B = 3$.
\end{proposition}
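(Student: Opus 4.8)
The plan is to reduce from the $\NP$-complete problem of \Theorem{ec-np}: given a cubic graph $G=(V,E)$, decide whether its chromatic index $\chi'(G)$ is $3$ or $4$ (these are the only possibilities, as $\MaxDegree=3$ is a lower bound on it and $\MaxDegree+1=4$ an upper bound). The reduction is essentially the identity map: view $G$ as an instance of $\B$-\MDM{} with $\B=3$ and unit demands ($\Demand(e)=1$ for all $e\in E$), which is already a valid unweighted input, and take the target weight to be $m=\Card{E}$. The crux is the claim that $G$ admits a $3$-disjoint matching of weight $m$ if and only if $\chi'(G)=3$; given this, the decision version of $3$-\MDM{} (``is there a $3$-disjoint matching of weight at least $m$?'') is $\NP$-complete, and hence $\B$-\MDM{} is $\NP$-hard, already for $\B=3$ and unit weights.

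For the ``if'' direction, a proper $3$-edge-coloring of the cubic graph $G$ partitions $E$ into three color classes, each of which is a perfect matching, since every vertex, having degree $3$, meets exactly one edge of each color; this triple of matchings is a $3$-disjoint matching whose union is all of $E$, hence of weight $m$. (Consistency check: $n=\Card{V}$ is even because $\sum_{v\in V}\Degree(v)=3n$ must be even, so each color class has $n/2$ edges and $3\cdot n/2=m$.) For the ``only if'' direction, suppose some $3$-disjoint matching $\DjMatching=(\Matching_1,\Matching_2,\Matching_3)$ has weight $m$. Since the demands are unit and the $\Matching_i$ are pairwise edge-disjoint subsets of $E$, the equality $\Card{\Matching_1}+\Card{\Matching_2}+\Card{\Matching_3}=m$ forces $\Matching_1\cup\Matching_2\cup\Matching_3=E$ with the $\Matching_i$ partitioning $E$; assigning color $i$ to each edge of $\Matching_i$ then yields a proper $3$-edge-coloring of $G$, so $\chi'(G)=3$. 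In particular, when $\chi'(G)=4$ the maximum weight of a $3$-disjoint matching is at most $m-1$.

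I do not anticipate a genuine obstacle: the transformation is trivially polynomial-time, and the only step that needs more than a line is the equivalence, in a cubic graph, between a weight-$m$ $3$-disjoint matching and a proper $3$-edge-coloring — which is exactly the degree/parity bookkeeping above. Everything else is immediate from the definitions of $\B$-\MDM{} and of the chromatic index.
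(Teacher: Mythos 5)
Your proof is correct and follows exactly the paper's route: a reduction from Holyer's result (\Theorem{ec-np}) in which a cubic graph with unit demands admits a $3$-disjoint matching of weight $m$ if and only if its chromatic index is $3$. You simply spell out the disjointness/weight bookkeeping that the paper's two-line proof leaves implicit.
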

\begin{proof}%
Consider a cubic graph $G$.
There is a one-to-one correspondence between deciding the chromatic index of $G$ and deciding whether $\B = 3$ disjoint matchings of weight
at least $m$ exist in a graph where $D = 1$:
The set of edges of the same color in the 3-coloring give three matchings
and three disjoint matchings in $G$ give a 3-coloring of $G$.
\end{proof}
Recall that for $\B =1$, \MDMlong{} is equal to \Problem{Weighted
Matching} and thus solvable in polynomial time, whereas for $\B =2$ the
complexity is still unknown.
Computing a maximum weight matching, removing it, and computing a second does
not necessarily yield an optimal solution for $\B=2$, as we show in
\Section{algorithms}.
Note that it is easy to tell whether a graph with maximum degree two can be
colored with two or three colors:
Two colors always suffice unless the graph has an odd-length cycle.
We give below various polynomial-time algorithms whose running time is
polynomial in $n$, $m$, and $\B$ (observe that $\B \leq \MaxDegree + 1$).

In consequence of \Theorem{ec-np}, \Problem{Edge Coloring} is inapproximable
within a factor less than $4/3$.
This implies the following inapproximability result for \MDM{}:
\begin{proposition}\PropLabel{approx}
It is $\NP$-hard to approximate the \MDMlong{} problem
within a factor of $(1-\eps)$ for any $1/m \ge \eps > 0$.
\end{proposition}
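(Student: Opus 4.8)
The plan is to obtain this statement as a quantitative reading of the reduction behind \Proposition{npc}: that reduction already separates yes-instances from no-instances by a single edge, and one missing edge out of $m$ is precisely a multiplicative factor of $1-\tfrac1m$. Concretely, I would reuse exactly the instances used for \Theorem{ec-np} and \Proposition{npc} --- cubic graphs $G=(V,E,\Demand)$ with $D=1$ and $\B=3$, for which deciding whether the chromatic index is $3$ or $4$ is $\NP$-complete --- and turn that decision gap into an inapproximability gap for \MDMlong{}.

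The key step is to pin down $\OPT$ exactly on such an instance. If $G$ is $3$-edge-colorable, its three color classes are pairwise edge-disjoint matchings whose union is all of $E$, so $\OPT = m$. Conversely, because every demand equals $1$ and the three matchings are pairwise disjoint, a $3$-disjoint matching of weight $m$ must place each edge in exactly one of the matchings and is therefore a proper $3$-edge-coloring; hence if the chromatic index of $G$ is $4$, then $\OPT \ne m$, and by integrality $\OPT \le m-1$. This yields the dichotomy $\OPT = m$ on yes-instances versus $\OPT \le m-1 = (1-\tfrac1m)\,m$ on no-instances.

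The remainder is the standard gap-to-hardness step. Suppose, for contradiction, that some polynomial-time algorithm always returns a $\B$-disjoint matching of weight $v \ge (1-\eps)\,\OPT$ for a fixed $\eps$ with $0 < \eps \le 1/m$, and run it on a cubic graph $G$. In the yes-case, $v \ge (1-\eps)\,m \ge m-1$, which for $\eps < 1/m$ is $v > m-1$ and hence $v = m$ by integrality of the (unit-weight, integral) objective; in the no-case, $v \le \OPT \le m-1$. Thus the single comparison ``$v = m$?'' decides $3$-edge-colorability of cubic graphs in polynomial time, contradicting \Theorem{ec-np} unless $\mathcal{P} = \NP$. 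Since the reduction does not modify the graph, $m$ is exactly the number of edges of the input cubic graph, so this establishes the claim for every $\eps < 1/m$.

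The step I expect to require the most care is the endpoint $\eps = 1/m$, together with the role of integrality. The dichotomy above is tight --- there are cubic graphs that are not $3$-edge-colorable yet have $\OPT = m-1$ exactly --- so no larger multiplicative gap is obtainable from this reduction, which is precisely why the statement is phrased for $\eps \le 1/m$; and at the endpoint $(1-\eps)\,m = m-1$ on yes-instances, so the passage from $v \ge (1-\eps)\,\OPT$ to $v = m$ relies on reading ``approximation within a factor $1-\eps$'' in the strict sense $v > (1-\eps)\,\OPT$, the convention in force here. Everything else --- verifying the color-class/edge-disjointness correspondence in the unit-demand case and that $v$ is an integer because the demands are --- is bookkeeping.
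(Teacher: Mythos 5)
Your proposal is correct and follows essentially the same route as the paper's own proof: both reduce from Holyer's theorem on cubic graphs with unit demands and $\B=3$, using the dichotomy $\OPT=m$ versus $\OPT\le m-1$ to show that a $(1-\eps)$-approximation with $\eps\le 1/m$ would decide $3$-edge-colorability. Your write-up is merely more explicit about integrality and the endpoint $\eps=1/m$, which the paper handles implicitly by phrasing its argument in terms of ratios strictly greater than $1-\frac{1}{m}$.
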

\begin{proof}
Consider a cubic graph $G$ with $\MaxWeight=1$.
If $G$'s chromatic index is $3$, then every algorithm for $\B$-\MDM{} with an
approximation ratio strictly greater than
$(1-\frac{1}{m})$ must compute a solution for $\B = 3$ that contains more than
$m(1-\frac{1}{m}) = m-1$ edges, which implies it must contain $m$ edges. But if it is $4$, it can contain at most $m-1$ edges. Thus any such algorithm can be used to decide whether the chromatic index is 3 or 4.
\end{proof}
\Problem{Edge Coloring} can be solved to optimality in $\bigO(2.2461^m)$ time
by first obtaining the line graph and then applying an algorithm for
\Problem{Vertex
Coloring}~\cite{DBLP:journals/siamcomp/BjorklundHK09}.
Misra and Gries~\cite{DBLP:journals/ipl/MisraG92} gave an algorithm that
constructs a $\Delta+1$ coloring in time $\bigO(nm)$, whereas an algorithm that
greedily colors the edges with the first available color can use up to
$2\Delta-1$ colors, which is optimal in the online
setting~\cite{DBLP:journals/ipl/Bar-NoyMN92}. %

The \Problem{$r$-factorization} problem asks for a partition of a graph's edge set into a
disjoint collection of $r$-regular spanning subgraphs, called $r$-factors.
Thus, a $1$-factor is a perfect matching, and only regular graphs with an even
number of vertices can have a $1$-factorization, which is then equivalent to
the \Problem{Edge Coloring} problem.

\section{Algorithms}\SectLabel{algorithms}
In the following, we propose and engineer different approaches to obtain
disjoint matchings.
As the problem is \NP{}-hard and the field of application requires solutions
computable within fractions of a second, we concentrate on algorithms from
which we expect good, though not necessarily optimal quality.
Our approaches are inspired by and partially also built on methods for the
related problems of \Problem{Weighted Matching}, \Problem{Weighted $b$-Matching},
and \Problem{Edge Coloring},
and are evaluated on a diverse set of instances in \Section{experiments}.

\Table{complexities} provides an overview of all algorithms and lists
their time complexity as well as approximation guarantees.
\begin{table}[tb]
\centering
\caption{Running Time complexities of various algorithms.}\TabLabel{complexities}
\begin{tabular}{@{}lcc@{}}
\toprule
Algorithm
& Running Time Complexity
& Approximation
\\
\midrule
\BlossomIt{}%
& $\bigO(\B n(m + n \log n))$
& $\leq \nicefrac{7}{9}$%
\\
\GreedyIt{}%
& $\bigO(\Sort(m) + \B m)$
& \halfalpha{} %
\\
\GPAIt{}%
& $\bigO(\Sort(m) + \B m)$
& $\leq\halfalpha{}$\\
\BGreedyExt{}%
& $\bigO(\Sort(m) + \B n^2)$
& $\leq\halfalpha{}$\\
\nodecentered{}%
& $\bigO(\Sort(n) + n \cdot \Sort(\MaxDegree{}) + \B m)$
& $\leq\halfalpha{}$\\
\kEC{}%
& $\bigO(\Sort(m) + \B n^2)$
& $\leq\halfalpha{}$\\
\bottomrule
\end{tabular}
\end{table}
\subsection{Algorithms based on Weighted Matching (\AlgName{*-It})}
Given an algorithm $\MatchAlg$ for the weighted matching problem, a
straightforward approach to obtain $\B$ disjoint matchings of
large total weight consists in running $\MatchAlg$ $\B$ times and making the
set of matching edges ``unavailable'' to subsequent runs of $\MatchAlg$.
As the disjoint matchings are obtained iteratively, we use the suffix
\textsc{-It} for algorithms following this scheme.
We study different options for $\MatchAlg$:

\begin{figure}[tb]
\centering
\begin{tikzpicture}
\node[inner sep=0pt] (hyper) {\includegraphics[height=2.2cm]{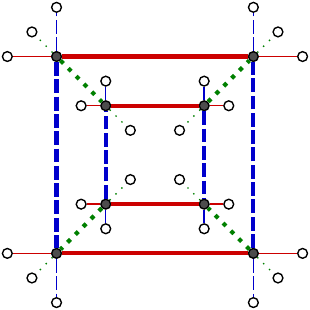}\phantomsubcaption\label{fig:hypercube}};
\node[inner sep=0pt,anchor=west,xshift=.5cm] (blossom-it-k2) at (hyper.east) {\includegraphics[height=2.2cm]{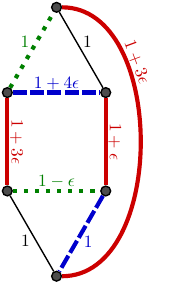}\phantomsubcaption\label{fig:bit-k2}};
\node[inner sep=0pt,anchor=west,xshift=.5cm] (opt-k2) at (blossom-it-k2.east) {\includegraphics[height=2.2cm]{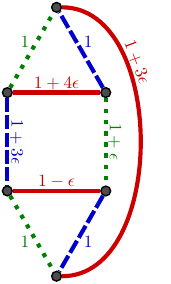}\phantomsubcaption\label{fig:opt-k2}};
\def\RefAnchor{north east}
\node[anchor=\RefAnchor,inner sep=0pt,text height=2ex] at (hyper.north west) {(\subref*{fig:hypercube})};
\path (hyper.north east) -| node[anchor=\RefAnchor,inner sep=0pt,text height=2ex] {(\subref*{fig:opt-k2})}  (opt-k2.north west) ;
\path (hyper.north east) -| node[anchor=\RefAnchor,inner sep=0pt,text height=2ex] {(\subref*{fig:bit-k2})} (blossom-it-k2.north west);
\end{tikzpicture}
\caption{(\subref*{fig:hypercube}): Hypercube $Q_3$ with $3$ ``greedy'' matchings (thick) vs.\ optimum (thin).
$\frac{5}{6}$- and $\frac{7}{9}$-approximations for $\B = 2$ and $\B=3$, respectively, by \BlossomIt{} (\subref*{fig:bit-k2}) vs.\ optimum (\subref*{fig:opt-k2}).
Matchings have the same color/style, the first is red and thick.}%
\FigLabel{greedy-it-approx}
\end{figure}

\GreedyIt{}.
A matching that has at least half the weight of a maximum weight matching can
be obtained by a greedy algorithm in $\bigO(\Sort(m)) \subseteq \bigO(m \log
n)$ time:
Starting from an empty matching, it repeatedly adds the heaviest non-matching
edge $e$ and removes all edges incident to one of $e$'s end vertices until the
graph is empty.
To obtain $\B$ disjoint matchings based on this greedy strategy, it suffices to
sort the edges according to their weight once and construct
the disjoint matchings by iterating $\B$ times over the list of sorted edges
and removing an edge from the list as soon as it becomes part of a matching.
The resulting algorithm \GreedyIt{}
has a running time of $\bigO(\Sort(m) + \B m)$.
The greedy matching algorithm achieves an approximation ratio of $\frac{1}{2}$,
which also transfers to \GreedyIt{}: %
\begin{lemma}\LemLabel{greedy-it-approx}
\GreedyIt{} computes a $\frac{1}{2}$-approximation to the \MDMlong{} problem.
This bound is tight.
\end{lemma}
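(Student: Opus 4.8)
The statement has two halves. Tightness is immediate: for $\B=1$ the procedure is ordinary greedy weighted matching, which on the path $a$--$b$--$c$--$d$ with $\Demand(bc)=1$ and $\Demand(ab)=\Demand(cd)=1-\eps$ returns weight $1$ against optimum $2-2\eps$, so $\GreedyIt$'s ratio cannot be improved beyond $\frac{1}{2}$ in general. For the lower bound the plan is a charging argument. Write $\Matching_i$ for the $i$-th matching produced by $\GreedyIt$ and $H_i$ for the graph remaining before round $i$, \ie{} $G$ with the edges of $\Matching_1,\dots,\Matching_{i-1}$ deleted; by construction $\Matching_i$ is a greedy, heaviest-first matching of $H_i$, so the usual one-matching fact applies within each round: if an available edge $o$ is not placed into $\Matching_i$, then $\Matching_i$ contains an edge incident to $o$ that was processed before $o$ and hence has weight at least $\Demand(o)$. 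Fix an optimal solution $(\mathcal{O}_1,\dots,\mathcal{O}_{\B})$, let $\mathcal{O}$ be the union of its edges, and let $r(o)$ be the round in which $\GreedyIt$ placed $o\in\mathcal{O}$, if it placed it at all.

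The crux is a re-colouring claim: since $\mathcal{O}$ is properly $\B$-edge-colourable, it should admit a proper $\B$-edge-colouring $\chi$ with $\chi(o)\le r(o)$ for every placed $o\in\mathcal{O}$ (edges of $\mathcal{O}$ that $\GreedyIt$ discarded may receive any colour). Setting $\ell(o)=r(o)$ for placed edges and $\ell(o)=\B$ otherwise, this is a list edge-colouring with prefix lists $\{1,\dots,\ell(o)\}$, and the relevant Hall-type condition is met: for every vertex $v$ and every $t$, at most $t$ of the $\mathcal{O}$-edges at $v$ satisfy $\ell\le t$, because $\Matching_1,\dots,\Matching_t$ cover $v$ at most $t$ times altogether, so $\GreedyIt$ placed at most $t$ of $v$'s incident $\mathcal{O}$-edges during the first $t$ rounds. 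I expect proving this re-colouring to be the main obstacle --- list edge-colouring is subtle in general --- but the prefix structure of the lists together with this condition should make it go through, for instance by starting from an arbitrary $\B$-edge-colouring of $\mathcal{O}$ and repairing deadline violations class by class from colour $\B$ downwards using Kempe/Vizing alternating chains; the Hall condition is what supplies, at each repair, a free colour below the deadline, while termination needs a little care (a potential function or a fixed processing order).

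Given such a $\chi$, the charging closes the proof. For $o\in\mathcal{O}$ with $\chi(o)=j$: if $\GreedyIt$ placed $o$ into $\Matching_j$ itself, charge $\Demand(o)$ to the edge $o$; otherwise $\chi(o)\le r(o)$ forces $r(o)>j$ (or $o$ was discarded), so $o$ was available but rejected in round $j$, giving an incident edge $g(o)\in\Matching_j$ with $\Demand(g(o))\ge\Demand(o)$, to which we charge $\Demand(o)$. Now consider $e=\{x,y\}\in\Matching_j$. It gets a self-charge of $\Demand(e)$ only when $e\in\mathcal{O}$ with $\chi(e)=j$; and it gets a blocker-charge only from $\mathcal{O}$-edges of colour $j$ incident to $x$ or $y$ --- at most one at each endpoint, since the colour-$j$ class of $\chi$ is a matching --- each of weight at most $\Demand(e)$. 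When $\chi(e)=j$, however, $e$ is itself the unique colour-$j$ $\mathcal{O}$-edge at both $x$ and $y$, so no blocker-charge occurs in that case. Hence every $e\in\DjMatching$ receives at most $2\Demand(e)$, and therefore $\Demand(\mathcal{O})=\sum_{o\in\mathcal{O}}\Demand(o)=\sum_{e\in\DjMatching}(\text{charge on }e)\le2\Demand(\DjMatching)$, which is the claimed $\frac{1}{2}$-bound.
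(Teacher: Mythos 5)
Your final charging step is sound \emph{conditional on} the re-colouring claim, but that claim --- which you correctly flag as the crux --- is false, so the proof of the $\frac{1}{2}$ lower bound does not go through. Concrete counterexample with $\B=2$: let $G$ be the path $v_1v_2v_3v_4v_5v_6$ with $e_i=\Edge{v_i}{v_{i+1}}$ and $\Demand(e_1)=5$, $\Demand(e_2)=3$, $\Demand(e_3)=2$, $\Demand(e_4)=4$, $\Demand(e_5)=3$. Then \GreedyIt{} produces $\Matching_1=\{e_1,e_4\}$ and $\Matching_2=\{e_2,e_5\}$ (weight $15$), while the unique optimum takes all five edges, $\mathcal{O}_1=\{e_1,e_3,e_5\}$ and $\mathcal{O}_2=\{e_2,e_4\}$ (weight $17$). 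So $r(e_1)=r(e_4)=1$, and your claim requires a proper $2$-edge-colouring $\chi$ of the path with $\chi(e_1)=\chi(e_4)=1$; but then $\chi(e_2)=2$ at $v_2$, hence $\chi(e_3)=1$ at $v_3$, which clashes with $\chi(e_4)=1$ at $v_4$. No admissible $\chi$ exists. Note that your Hall-type condition \emph{is} satisfied here (every vertex has at most one incident edge with deadline $1$, and trivially at most two with deadline $\le 2$), so the condition you propose to drive the Kempe-chain repair is genuinely insufficient rather than merely hard to exploit: the repair procedure cannot terminate successfully because its target does not exist. (The lemma itself is not contradicted --- $15/17>\frac{1}{2}$ --- only your route to it.)

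The paper's proof avoids rounds-to-colour-classes bookkeeping altogether: for an optimal edge $\Edge{u}{v}$ absent from $\DjMatching$, \emph{every} one of the $\B$ rounds must have rejected it, so $\DjMatching$ contains at least $\B$ distinct edges incident to $u$ or $v$, each of weight at least $\Demand(u,v)$, and the charge is then spread over all of these blockers at once while each greedy edge absorbs charges from at most one optimal edge per colour class of $\DjMatching^*$ at each endpoint. If you want to salvage your per-round scheme, you would need an assignment of optimal edges to rounds that is \emph{not} required to be a proper colouring of $\mathcal{O}$, or a fractional assignment spreading each $\Demand(o)$ over its $\B$ blockers; the latter needs extra care for edges in $\DjMatching\cap\DjMatching^*$, where a naive count gives a factor $3-\nicefrac{2}{\B}$ instead of $2$. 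Separately, your tightness example only covers $\B=1$; the paper's \Lemma{half-approx} (hypercube plus pendant edges) establishes tightness for every $\B$, which is the intended, stronger statement.
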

\begin{proof}%
Let $\DjMatching$ be the solution computed by \GreedyIt{}, let $\DjMatching^*$ be
an optimal solution, and consider an edge $\Edge{u}{v}$ in $\DjMatching^*
\setminus \DjMatching$.
By construction, $\DjMatching$ then contains at least $\B$ edges that are
incident to either $u$ or $v$ and all have weight at least $\Demand(u, v)$.
However, each of the edges in $\DjMatching \setminus \DjMatching^*$ can have
prevented at most two edges in $\DjMatching^* \setminus \DjMatching$ from being
picked themselves, one incident to each of its end nodes.
Hence, $\Demand(\DjMatching^*) \leq 2\cdot \Demand(\DjMatching)$.
The tightness follows by \Lemma{half-approx}.
\end{proof}

\GPAIt{}.
The Global Paths Algorithm (\GPA{}) is a \halfapprox{} algorithm for the
weighted matching problem introduced by Maue and
Sanders~\cite{DBLP:conf/wea/MaueS07}.
It is especially of interest here as the authors have shown that it produces
results that are very close to optimal in experiments, especially if combined
with the postprocessing routine \ROMA{} (see also the end of this section). %
\GPA{} grows a set of paths and even-length cycles as follows:
Initially, every vertex forms a path of zero length.
An edge is called \emph{applicable} if it connects two different paths or the
two end nodes of the same odd-length path.
The algorithm iterates over all edges in weight-descending order and
joins or closes paths by applicable edges.
Afterwards,
it computes an \emph{optimal} weighted matching for each path and even-length
cycle via dynamic programming, which takes time linear in the length of the
path or cycle.
The total running time for \GPA{} hence is $\bigO(\Sort(m) + m)$.
To use $\GPA$ as $\MatchAlg$, it suffices again to sort the edges just once and
only run the path growing and dynamic programming steps $\B$ times, which
results in a total running time of $\bigO(\Sort(m) + \B m)$ for this algorithm,
which we refer to as \GPAIt{}.

\BlossomIt{}.
We also evaluate the use of an optimal weighted matching algorithm as subroutine.
\Blossom{} is the famous algorithm developed by Edmonds originally for the
unweighted matching problem~\cite{edmonds_1965_cardinality}, which he later extended
also to the weighted case~\cite{edmonds1965maximum}.
The key idea is to grow alternating-path trees and shrink odd-length cycles
(called blossoms) to find augmenting paths, which is guided by a (dual) vertex
labelling in the weighted case.
Our algorithm \BlossomIt{} follows the scheme described above and simply
repeats this algorithm $\B$ times, which results in a running time of $\bigO(\B
\cdot n(m + n \log n))$.
\BlossomIt{} does not compute an optimal solution to \MDMlong{}:
As shown in \Figures{bit-k2}{opt-k2},
its approximation ratio can be at most $\frac{5}{6}$ for $\B = 2$ and
$\frac{7}{9}$ for $\B = 3$, \emph{also if all edge weights are set to $1$}.
The solution can be forced by simply setting the weights of the edges in the
first matching to $1 + \epsilon{}$ for some small $\epsilon > 0$ or
to the weights shown in the figure.
Observe that with the shown weights, \GreedyIt{} computes an optimal solution
for both $\B = 2$ and $\B = 3$, \ie, \emph{\BlossomIt{} is not guaranteed to
perform better than \GreedyIt{}.}
However, \BlossomIt{} computes an optimal solution for $\B=1$ and hence
trivially computes a $\frac{1}{2}$-approximation for $\B =2$, as the
optimum for $\B=2$ can be at most twice as large as the optimum for $\B=1$.

\subsection{Algorithms based on Weighted $b$-Matching and Coloring}
\BGreedyExt{}.
We make use of the fact that every \B{}-disjoint matching is also a
$\B$-matching, see also \Section{related}.
Furthermore, the edges of every graph with maximum vertex degree $\Delta$ can
be partitioned into a set of at most $\Delta+1$ matchings by coloring its edges,
which implies that every $(\B{}-1)$-matching can be
translated into a \B{}-disjoint matching without loss of weight.

Analogously to the greedy weighted matching algorithm used in \GreedyIt{},
there is a na\"\i{}ve greedy $b$-matching algorithm that yields a
$\frac{1}{2}$-approximation~\cite{DBLP:conf/esa/Mestre06}.
It iterates over all edges in weight-decreasing order and adds each edge
$\Edge{u}{v}$ to the $b$-matching unless $u$ and $v$ are already incident to
$b$ matching edges.
The running time of this algorithm is $\bigO(\Sort(m))$.
\BGreedyExt{} first obtains a $(\B-1)$-matching and then colors the subgraph
induced by the edges of the $(\B-1)$-matching with the edge coloring algorithm
by Misra and Gries~\cite{DBLP:journals/ipl/MisraG92}, which needs at most $\B$
colors and runs in $\bigO(\B n^2)$ time.
Note that the induced subgraph has a maximum vertex degree of $\B-1$, so the
number of edges is in $\bigO(\B n)$ and the algorithm uses at most $\B$ colors.
The coloring assigns each edge of the subgraph to one of the $\B$ disjoint matchings.
\BGreedyExt{} then runs \GreedyIt{} to enlarge the $\B$ disjoint matchings if possible.
The running time of this algorithm hence is $\bigO(\Sort(m) + \B n^2)$.

All previously described procedures are based on matching or $b$-matching
algorithms, which do not tackle the problem directly and on the whole, but have
a more or less limited view.
We therefore complement our set of algorithms by two further approaches that
try to find a heavy-weight subgraph with chromatic index $\B$.

\nodecentered{}.
The algorithm %
follows a greedy, node-centered strategy:
In a preprocessing step, it calculates a \emph{rating} for each vertex and sorts
the vertices according to their rating.
We consider different options to obtain a vertex's rating from the weights of
its incident edges: the arithmetic mean, the median, the sum, the
maximum, as well as the sum of the $\B$ largest weights (called \NCbsum{}).
Next, it processes the vertices in rating-decreasing order
and tries to color its incident edges in weight-decreasing order.
Each color represents one of the $\B$ disjoint matchings
and the algorithm has to ensure that no vertex is incident to two edges
of the same color.
Hence, if for an edge \Edge{u}{v} the vertices $u$ and $v$ do not share any common
free color, the edge is not picked.
The algorithm stores for each vertex and color a Boolean flag whether this color
has already been used for an incident edge, such that finding a common free
color takes $\bigO(\B)$ time if both end vertices have been matched at most
$\B-1$ times, and $\bigO(1)$ otherwise.

To avoid an overly greedy coloring, we introduce a \emph{threshold}
$\NCthresh \in \Range{0}{1}$ and ignore all edges with weight less than
$\NCthresh\cdot \MaxWeight$, where $\MaxWeight$ is the maximum weight of any
edge.
In this case, the first phase, in which all vertices are processed as described
above, is followed by a second phase, where we merge the sorted lists of the
non-matching edges at each vertex into one sorted list and match and color
greedily.
The running time of the algorithm is $\bigO(\Sort(n) + n \cdot
\Sort(\MaxDegree) + \B m)$ and independent of $\NCthresh$.
Note that $\NCthresh = 0$ is equivalent to setting no threshold.
Irrespective of how the rating function and threshold is chosen, the algorithm
computes the same greedy weighted matching as \GreedyIt{} for $\B = 1$.

\kEC{}.
We also designed a $\B$-edge coloring algorithm that uses the algorithm by
Misra and Gries (\misragries)~\cite{DBLP:journals/ipl/MisraG92} directly as
basis, but only colors up to $\B$ incident edges of each vertex and
takes edge weights into account.
A key property of the \misragries{} algorithm with respect to our modification
is that once an edge has been colored, it may only be recolored later, but
never uncolored.
To color an edge $\Edge{u}{v}$, \misragries{} builds a maximal \emph{fan}
around $u$, which is a sequence $(w_0 = v, w_1, \dots, w_{\ell})$ of distinct
neighbors of $u$ such that, for all $1 \leq i \leq \ell$, if the edge
$\Edge{u}{w_i}$ has color $c$, then $w_{i-1}$ is not incident to an edge with
color $c$, \ie, $c$ is \emph{free} on $w_{i-1}$.
\misragries{} then determines a color $c$ that is free on $u$ and a color $d$
that is free on $w_{\ell}$.
If $d$ is not free on $u$, it looks for a path that starts at $u$ and whose
edges are alternatingly colored $d$ and $c$, and swaps these colors on
the path.
Afterwards, $d$ is guaranteed to be free on $u$.
The prefix of the fan up to the first neighbor $w_j$ where $d$ is free is then
\emph{rotated}, which means that each edge $\Edge{u}{w_i}$ is recolored
with the color of $\Edge{u}{w_{i+1}}$, for all $0 \leq i < j$,
and $\Edge{u}{w_j}$ is colored with $d$.

Our adaptation \kEC{} proceeds as follows:
It processes the edges in weight-descending order and, similar to
\misragries{}, tries to color each edge $\Edge{u}{v}$, however only with one of
up to $k$ colors.
The edge is skipped if $u$ or $v$ are already incident to $k$ colored edges.
If the last neighbor in the maximal fan around $u$ does not have a free color,
\kEC{} tries to color the edge with swapped rules for $u$ and $v$ instead,
and skips the edge if also this fails.
We consider four flags that modify this basic routine:
If \kECcc{} (\emph{common color}) is enabled, \kEC{} tries to find a common
free color of $u$ and $v$ first when trying to color $\Edge{u}{v}$.
With \kEClc{} (\emph{lightest color}), it tries to balance the total weight of
the edges of each color by always picking a free color with minimum total
weight so far.
If \kECrl{} (\emph{rotate long}) is set and the color $d$ is free on $u$, it
rotates the entire fan instead of determining the first neighbor $w_j$ where $d$
is free.
We also consider an option \kEClf{} (\emph{large fan}), where we try to avoid 
neighbors without a free color as long as possible while constructing the fan.
Note that by definition of the fan, a neighbor without free color cannot have a
successor.

As there are at most $kn$ edges that can be colored and coloring
an edge can be done in $\bigO(n)$,
the running time of the algorithm is $\bigO(\Sort(m) + kn^2)$.

\begin{lemma}\LemLabel{half-approx}
The approximation ratio achieved by \GreedyIt{}, \GPAIt{}, \BGreedyExt{}, \nodecentered{}, and \kEC{}
is at most $\frac{1}{2}$.
\end{lemma}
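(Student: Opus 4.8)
The plan is to establish the bound for two groups of algorithms separately, in each case by exhibiting an instance --- or a one-parameter family of instances --- on which the returned $\B$-disjoint matching has weight at most, or arbitrarily close to, half of the optimum. Since the statement only asserts an \emph{upper} bound on the achievable ratio, one witnessing family per algorithm suffices, and we may freely choose small values of $\B$ and break ties adversarially.

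For \GreedyIt{}, \BGreedyExt{}, \nodecentered{}, and \kEC{} I would take $\B=1$ and observe that for a single matching all four algorithms reduce to the plain greedy weighted matching algorithm: \GreedyIt{} and \BGreedyExt{} invoke it verbatim (for $\B=1$ the $(\B-1)$-matching used by \BGreedyExt{} is empty), \nodecentered{} is designed to reproduce it irrespective of the rating function and threshold, and \kEC{} with a single color scans the edges in weight-decreasing order, colors the first feasible edge, and can never uncolor it, which is exactly the greedy rule. It then remains to use the classical tight instance for greedy matching, the four-vertex path $a-b-c-d$ with $\Demand(a,b)=\Demand(c,d)=1$ and $\Demand(b,c)=1+\eps$: each of the four algorithms selects only the central edge, for a total weight of $1+\eps$, whereas the optimal solution $\{ab,cd\}$ has weight $2$, so the ratio tends to $\tfrac12$ as $\eps\to 0$. (Weights $1,1,1$ with adversarial tie-breaking, or a disjoint union of such gadgets, yield a fixed worst-case instance; this is also the instance needed for the tightness claim in \Lemma{greedy-it-approx}.)

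The case of \GPAIt{} is the genuine obstacle, since \GPA{} computes an \emph{optimal} matching on every path and even cycle it grows, and is therefore already optimal on the four-vertex path above (and, more generally, whenever $G$ is a disjoint union of paths). Here I would instead exploit the iteration with $\B\ge 2$: take a bipartite $\B$-regular graph $G$, so that (bipartite graphs being $\MaxDegree$-edge-colorable) the optimum for $\B$-\MDM{} on $G$ recovers \emph{all} $m=\B n/2$ edges, and then design an adversarial processing order for which each of the $\B$ runs of \GPA{} is forced to grow only a sparse subgraph of maximum degree $2$: by presenting the edges not incident to a suitable vertex set $I$ first, every vertex of $I$ is driven to degree $0$ in the grown structure (each of its neighbors is already saturated when its incident edges are scanned), so the optimal matching of that structure --- hence the matching output in that run --- covers only a constant fraction of $V$. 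Summing over the $\B$ runs then gives a solution of weight roughly $\tfrac12\,\OPT$.

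The subtle point --- and the reason the construction cannot be as small as for the greedy algorithms --- is that an edge \GPA{} declines in one run stays present in the residual graph and may be picked in a later run, so the instance must guarantee that the \emph{same} set $I$ is wasted in \emph{every} one of the $\B$ runs; this forces the neighbors of $I$ to have degree strictly above $\B$, so that across all rounds they can be re-saturated without ever resorting to an $I$-edge, and it is the verification that such a graph together with such an edge ordering exists --- and the concomitant bound on the ratio --- that is the main work. An alternative, possibly simpler, route is to fix the first run's matching directly via an $\eps$-perturbation of the weights, exactly as in the \BlossomIt{} construction of \Figure{greedy-it-approx}.
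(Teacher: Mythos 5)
Your treatment of \GreedyIt{}, \BGreedyExt{}, \nodecentered{}, and \kEC{} is sound: for $\B=1$ all four do collapse to the plain greedy weighted matching algorithm, and the weighted path $a$--$b$--$c$--$d$ gives a ratio tending to $\nicefrac{1}{2}$. Two caveats, though. First, this witnesses the bound only at $\B=1$; the paper's construction is parameterized by $\B$ and shows that the ratio cannot exceed $\nicefrac{1}{2}$ for \emph{every} $\B$, which is what is needed, e.g., for the tightness claim of \Lemma{greedy-it-approx} to say anything about $\B>1$. Second, and more importantly, your argument for \GPAIt{} is not a proof. You correctly identify the obstacle --- \GPA{} is exact on the paths and even cycles it grows, so the path gadget fails --- but the construction you then sketch (a bipartite $\B$-regular graph with an adversarial edge order that starves a vertex set $I$ in every round) is left unverified; indeed you note yourself both that ``the main work'' remains and that the neighbors of $I$ would need degree strictly above $\B$, which already contradicts the $\B$-regularity you started from. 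As it stands, the \GPAIt{} case is a genuine gap.

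The paper closes all five cases, for all $\B$, with a single family: the hypercube $Q_\B$ in which every vertex additionally carries $\B$ pendant degree-one neighbors, with weight $w+\epsilon$ on hypercube edges and $w$ on pendant edges. The heavy core is $\B$-regular, so all $\B$ matchings can be (and, under adversarial tie-breaking, are) filled with its $2^{\B-1}\B$ edges, while the optimum takes the $2^{\B}\B$ pendant edges; the ratio is $\frac{w+\epsilon}{2w}\to\nicefrac{1}{2}$. This is precisely the mechanism your \GPAIt{} sketch is missing: since the core edges are heavier, \GPA{} processes them first, its grown structure saturates every core vertex to degree two inside the core (the residual core contains a $2$-factor in every round but the last), and the pendant edges are then never \emph{applicable}, so the optimal DP never gets to consider them. (The last round degenerates --- the residual core is a perfect matching and pendant edges do become applicable --- so for \GPAIt{} the construction actually yields a ratio around $\frac{\B+1}{2\B}$, which still tends to $\nicefrac{1}{2}$ as $\B$ grows; this subtlety is exactly why a hand-waved ordering argument is not enough here.) To finish your two-group plan you would need to exhibit such a heavy, $\B$-regular core with $\B$ light pendant edges per core vertex, or simply adopt the paper's instance, which also upgrades your first group from $\B=1$ to arbitrary $\B$.
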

\begin{proof}
Let $\B \geq 1$ and consider the
$\B$-dimensional hypercube graph $Q_\B$, where each vertex additionally has
$\B$ vertices of degree one attached to it, and set all edge weights equal.
A solution returned by one of the mentioned algorithms may consist of all $2^{\B-1}\B$ hypercube
edges, whereas the optimal solution consists of all $2^{\B}\cdot \B$ edges
incident to the vertices of degree one, see also \Figure{hypercube}.
This solution can also be forced by assigning a (very large) weight
$w$ to all non-hypercube edges and an only slightly larger one $w + \epsilon$
to the hypercube edges.
The approximation ratio is then $\frac{w+\epsilon}{2w} = \frac{1}{2} +
\frac{\epsilon}{2w}$.
\end{proof}

\subsection{Postprocessing}\SectLabel{postprocessing}
We consider different postprocessing techniques to improve our algorithms:
\ROMA{} (\emph{Random Order Matching Augmentation}) was
originally proposed for \GPA{} by Maue and Sanders~\cite{DBLP:conf/wea/MaueS07}
to improve the weight of a matching $\Matching$.
For a configurable number $\lROMA{}$ of times, it randomly iterates through
all vertices and for each matched vertex $u$ attempts to improve $\Matching$ by
a so-called maximum-gain \twoaug{}:
A matching edge $\Edge{u}{v} \in \Matching$ is replaced by two non-matching
edges $\Edge{u}{r}$ and $\Edge{v}{s}$
if $r$ and $s$ are currently unmatched and
the gain $\Demand(u, r) + \Demand(v, s) - \Demand(u, v)$ is positive and as large as possible.
The procedure can be terminated early
as soon as one iteration leads to no change in the matching.
We adapt this approach straightforwardly for the $k$-\MDM{} problem by calling
the procedure for each of the $\B$ disjoint matchings separately right after
they have been obtained.
We also consider a variation \Swaps{}, where we instead iterate once
over all matching edges in weight-decreasing order and perform the same
maximum-gain \twoaug{} as in \ROMA{}.

For \Swaps{} as well as each iteration of \ROMA{}, we iterate over all matching
edges and for each of its endpoints explore all neighbors to find the heaviest
incident free edge.
Each matching edge is hence considered once and each non-matching edge at most
twice, which yields a running time complexity of $\bigO(m)$
in case of \Swaps{} under the assumption that the edges are already sorted.
\section{Experiments}\SectLabel{experiments}
We performed extensive experiments to evaluate the performance of
the algorithms described in \Section{algorithms} both with respect to
solution quality and running time.
To keep large numbers readable, we use $\Kilo{}$ and $\Mega{}$ as abbreviations
for thousands ($\times 10^3$) and millions ($\times 10^6$), respectively.
\subsection{Instances, Setup, and Methodology}
The first three collections we use originate from the application side
and contain real-world as well as synthetic instances,
the other two
have been used in previous
experimental evaluations for
$b$-matchings~\cite{DBLP:journals/siamsc/KhanPPSSMHD16}.
\textbf{Facebook Data Traces}~\cite{Roy2015InsideTS} are sets of
production-level traces from three different clusters in Facebook's Altoona Data Center.
For each cluster, there are over
\num{300}\Mega{}
traces, collected over a \SI{24}{\hour} period.
The resulting six instances have
\num{14}\Kilo{} to \num{5}\Mega{}
vertices and
\num{497}\Kilo{} to \num{164}\Mega{}
edges with demands between \num{28} and
\num{262}\Mega{}.
\textbf{HPC} represents four different applications run in parallel using
MPI~\cite{sigmetrics20complexity}.
The instances have
\num{1}\Kilo{}
vertices and up to
\num{38}\Kilo{}
edges;
the demands are between \num{2} and
\num{1}\Kilo{}.
We also use three synthetic \textbf{pFabric} traces~\cite{sigmetrics20complexity,Alizadeh2013pFabricMN},
which have \num{144} vertices and are generated based on flows arriving
according to a Poisson process, with flow rates in $\Set{0.1,0.5,0.8}$.
This results in about
\num{10}\Kilo{}
edges and demands between \num{1} and
\num{34}\Kilo{}.
Following the methodology of Khan
\etal{}~\cite{DBLP:journals/siamsc/KhanPPSSMHD16}, we include
nine instances from the \textbf{Florida Sparse Matrix
Collection}~\cite{Timothy2011_SparseMatrixCollection}.
These instances stem from collaboration networks, %
medical science, %
news networks, %
as well as sensor data, %
electro magnetics, %
and structural mechanics. %
They have
\num{13}\Kilo{} to \num{1}\Mega{}
vertices and
\num{121}\Kilo{} to \num{24}\Mega{} edges,
with demands between \num{1} and
\num{2147}\Mega{}.
Following~\cite{DBLP:journals/siamsc/KhanPPSSMHD16},
we also generated \num{66} Kronecker instances using the Graph500 \textbf{RMAT} generator
with $2^{x}$ vertices, $10 \leq x \leq 20$, and initiator matrices %
\emph{rmat\_b} with $(0.55, 0.15, 0.15, 0.15)$,
\emph{rmat\_g} with $(0.45, 0.15, 0.15, 0.25)$,
and \emph{rmat\_er} with $(0.25, 0.25, 0.25, 0.25)$.
Demands are chosen according to a uniform (\texttt{\_uni})
or exponential distribution (\texttt{\_exp})
and range between \num{1} and half a million.

We implemented %
our algorithms in \texttt{C++17} %
and compiled using \texttt{GCC} 7.5 and full optimization%
 (\texttt{-O3 -march=native -mtune=native}).
For \BlossomIt{}, we adapted an implementation
of the \Blossom{} algorithm from the \texttt{Lemon}%
\footnote{\url{https://lemon.cs.elte.hu/trac/lemon}}
library and made it
iterative by calling the algorithm $\B{}$ times, after each round setting
weights of matched edges to $0$.
We only report results for the variant where \Blossom{} starts with an
approximate matching obtained from a fractional solution instead of an empty
matching, as both compute optimal weighted matchings, but the
fractional option was considerably faster, requiring only half the running time
in the geometric mean across different data sets.

To determine the optimal weight and use it in quality comparisons, we also
implemented an exact algorithm by casting the problem as an ILP,
using an adaptation of the assignment formulation for the edge coloring
problem~\cite{DBLP:conf/latin/JabrayilovM18}, %
and solved it with \texttt{Gurobi}%
\footnote{\url{http://www.gurobi.com}}.
Unfortunately, it only terminated within the timeout of \SI{4}{\hour} for small instances and values of $\B$.

All experiments were performed on a machine with NUMA architecture running
Ubuntu 18.04
with %
\texttt{Intel(R) Xeon(R)} %
CPUs clocked at
\SI{3.40}{\giga\hertz} %
and
\SI{256}{\kilo\Byte} and \SI{20}{\mega\Byte} of L2 and L3 cache, respectively.
The execution of each experiment was pinned to a single CPU and its local
memory to prevent the cost of non-local memory accesses or swapping.
To counteract artifacts of measurement in running time, we ran each experiment
three times and use the median of the \textit{elapsed real time} (wall time).
The only exception to this rule is the ILP, which was run just once as we were
mainly interested in the solution size.
We set a timeout of \SI{4}{\hour}. %

\begin{figure*}[tb]
\def\TextHeight{3.6cm}
\centering
\begin{tikzpicture}[inner sep=0pt]
\matrix (matrix) [matrix of nodes,nodes={inner xsep=1mm,text height=\TextHeight,text width=.3\linewidth,align=left},inner sep=0mm] {%
\includegraphics[height=\TextHeight,width=\linewidth,keepaspectratio]{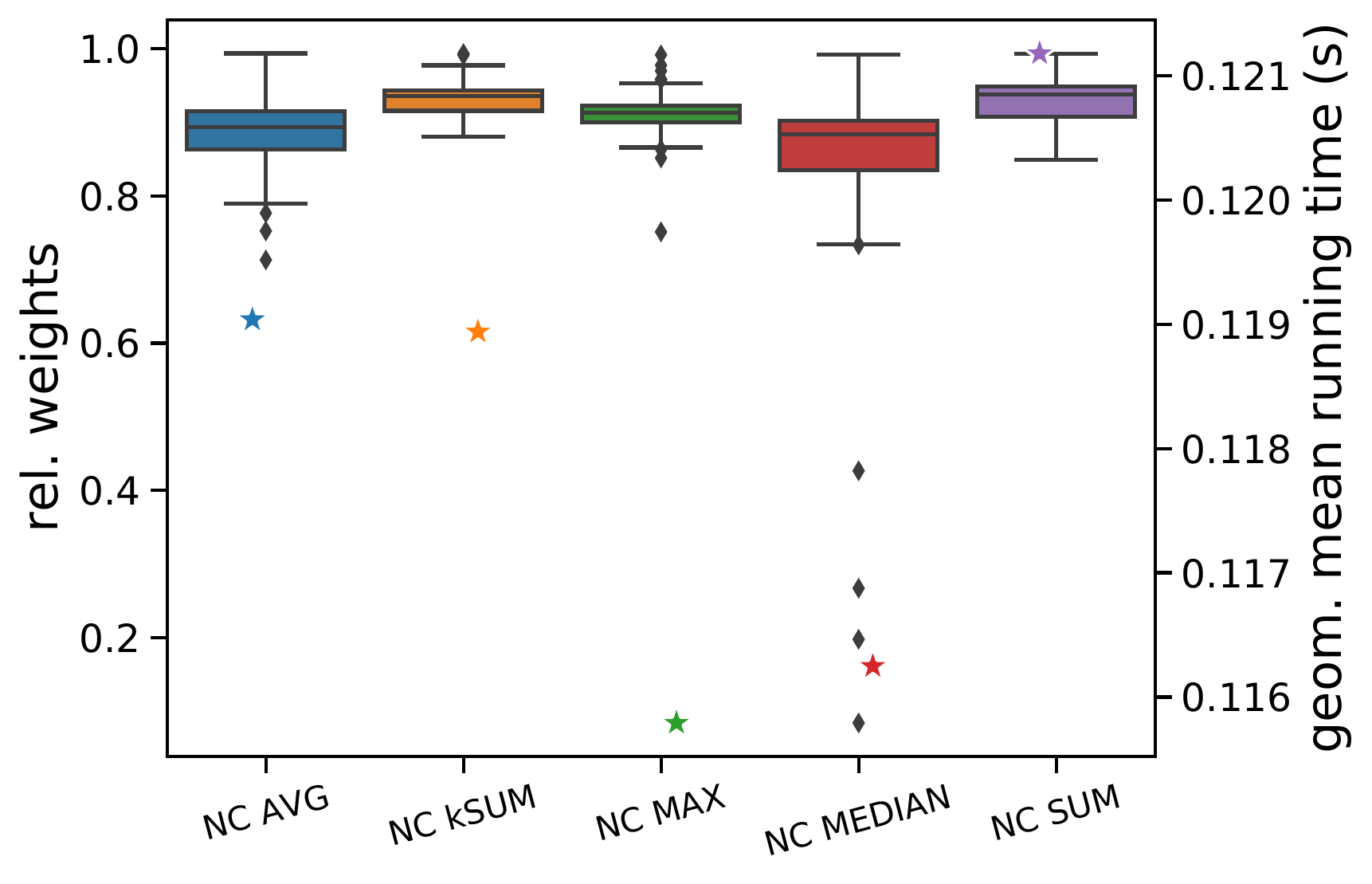}%
\phantomsubcaption\label{fig:nc-all-boxplot-b4}
&
\includegraphics[height=\TextHeight,width=\linewidth,keepaspectratio]{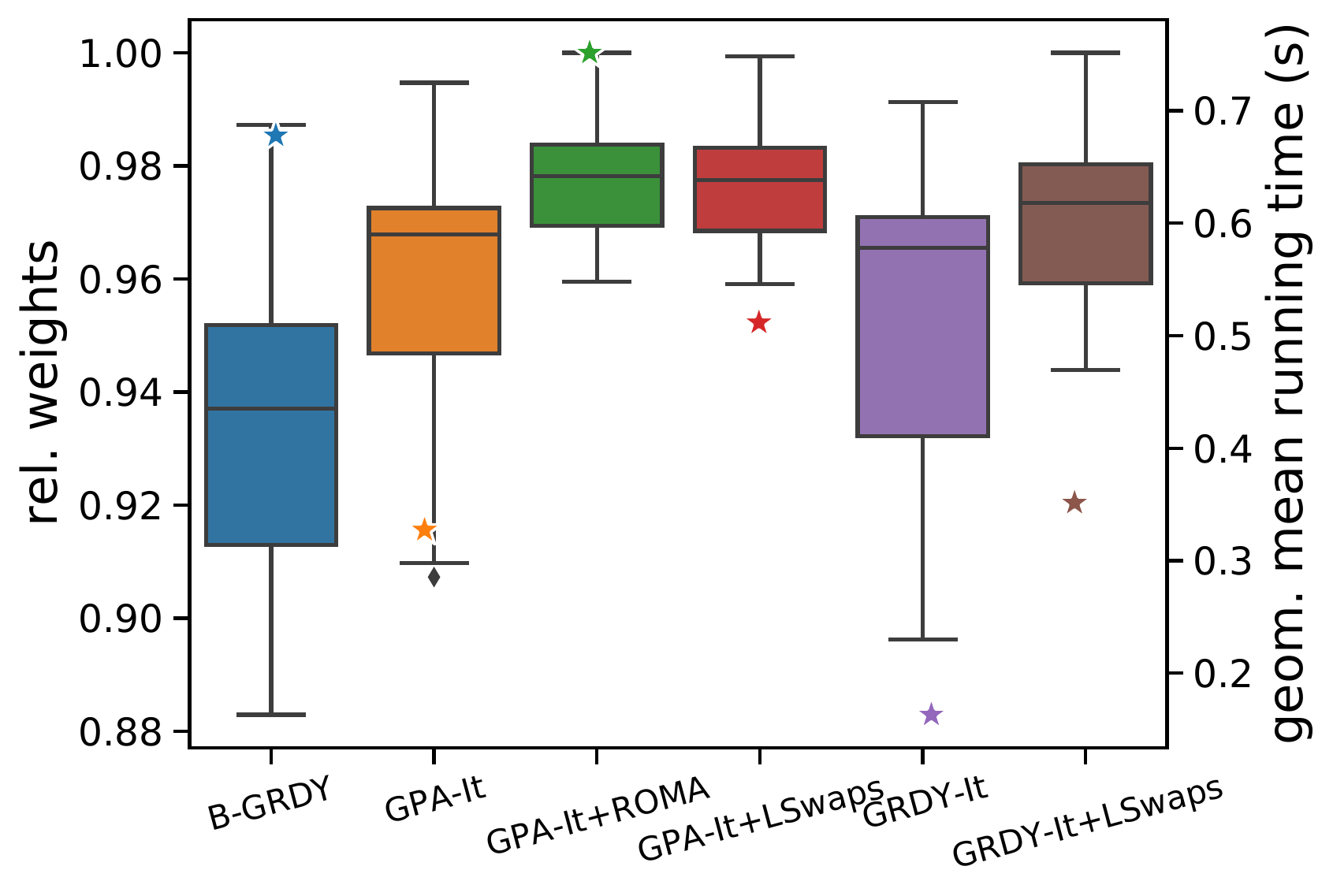}%
\phantomsubcaption\label{fig:greedy-gpa-all-boxplot-b4}
&
\includegraphics[height=\TextHeight,width=\linewidth,keepaspectratio]{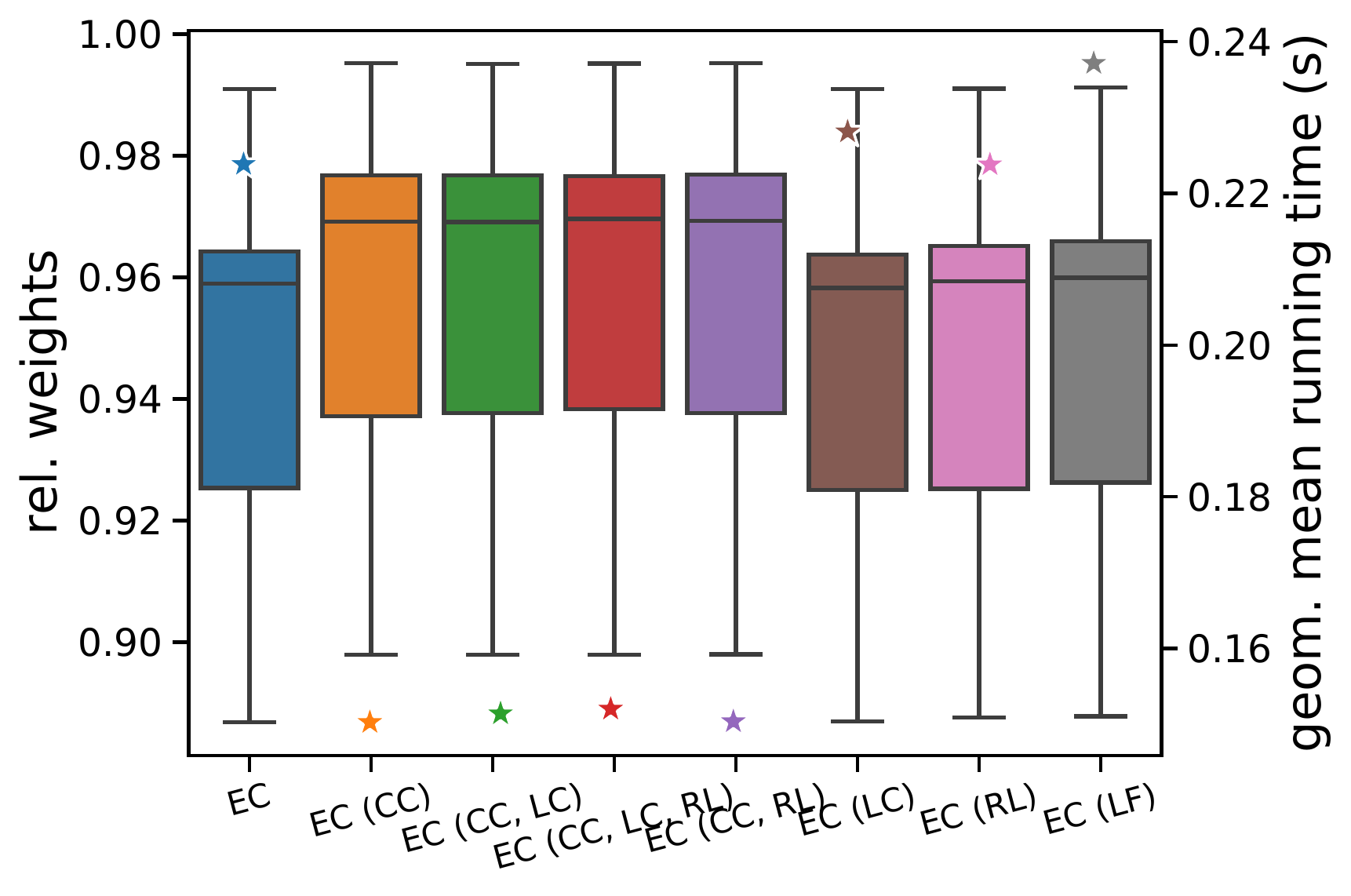}%
\phantomsubcaption\label{fig:kec-all-boxplot-b4}
\\
\includegraphics[height=\TextHeight,width=\linewidth,keepaspectratio]{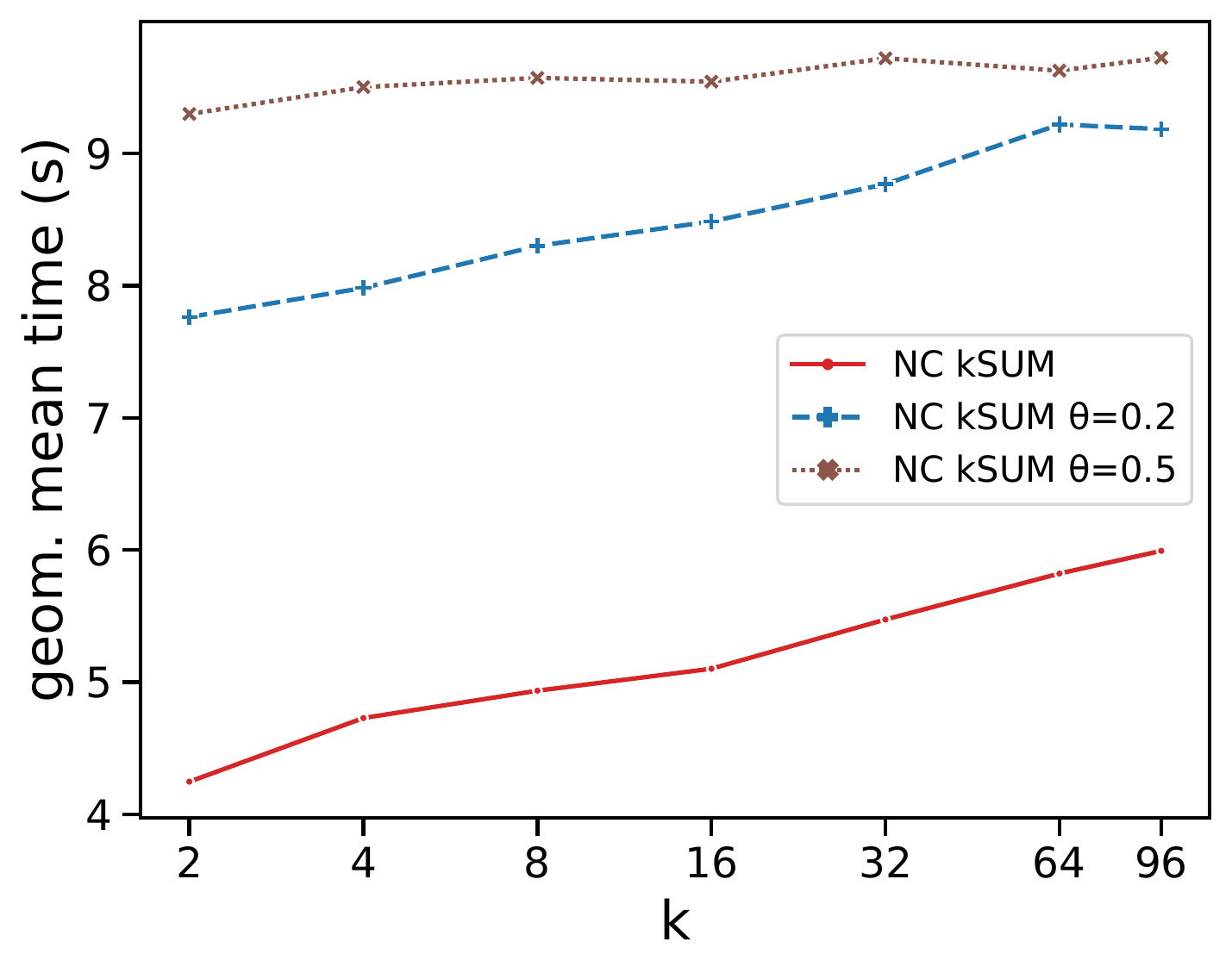}%
\phantomsubcaption\label{fig:best-nc-facebook-timeplot}
&
\includegraphics[height=\TextHeight,width=\linewidth,keepaspectratio]{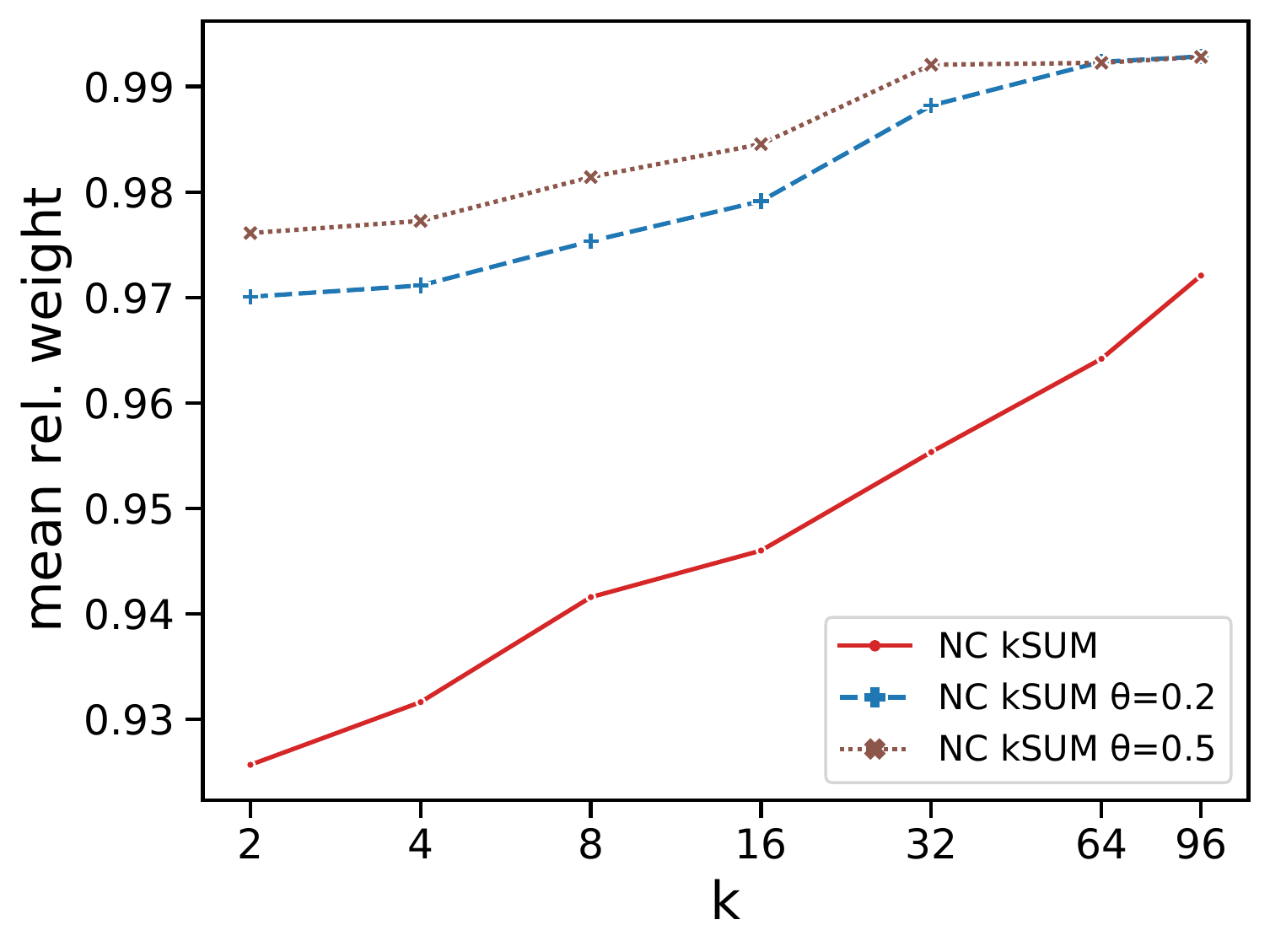}%
\phantomsubcaption\label{fig:best-nc-facebook-weightplot}
&
\includegraphics[height=\TextHeight,width=\linewidth,keepaspectratio]{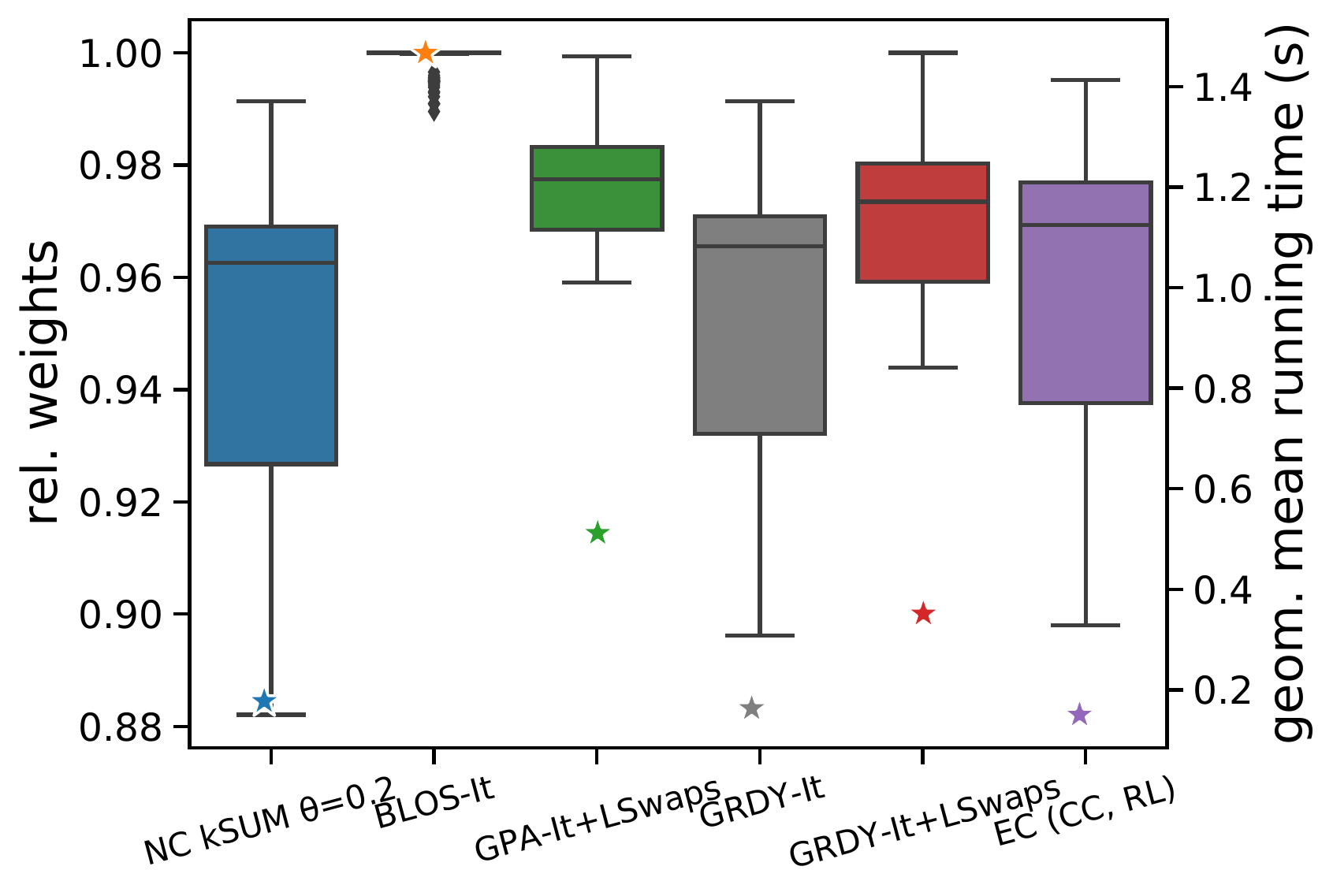}%
\phantomsubcaption\label{fig:best-all-boxplot-b4}
\\
};
\def\RefAnchor{north west}
\node[anchor=\RefAnchor,inner sep=0pt,text height=2ex] at (matrix-1-1.north west) {(\subref*{fig:nc-all-boxplot-b4})};
\node[anchor=\RefAnchor,inner sep=0pt,text height=2ex] at (matrix-1-2.north west) {(\subref*{fig:greedy-gpa-all-boxplot-b4})};
\node[anchor=\RefAnchor,inner sep=0pt,text height=2ex] at (matrix-1-3.north west) {(\subref*{fig:kec-all-boxplot-b4})};
\node[anchor=\RefAnchor,inner sep=0pt,text height=2ex] at (matrix-2-1.north west) {(\subref*{fig:best-nc-facebook-timeplot})};
\node[anchor=\RefAnchor,inner sep=0pt,text height=2ex] at (matrix-2-2.north west) {(\subref*{fig:best-nc-facebook-weightplot})};
\node[anchor=\RefAnchor,inner sep=0pt,text height=2ex] at (matrix-2-3.north west) {(\subref*{fig:best-all-boxplot-b4})};
\end{tikzpicture}
\caption{%
Result quality (left axis) and running time (right axis, depicted as star) for
\nodecentered{} with $\NCthresh = 0$ and different aggregation functions
(\subref{fig:nc-all-boxplot-b4}),
\BGreedyExt{}, \GPAIt{}, and \GreedyIt{} with and without postprocessing
(\subref{fig:greedy-gpa-all-boxplot-b4}),
\kEC{} (\subref{fig:kec-all-boxplot-b4}),
and for the set of the best algorithms (\subref{fig:best-all-boxplot-b4}),
in each case for $\B=4$ and all instance sets.
Running time (\subref{fig:best-nc-facebook-timeplot})
and result quality (\subref{fig:best-nc-facebook-weightplot})
for \nodecentered{} with \NCbsum{} and different thresholds
on \Instance{Facebook}.}%
\FigLabel{algorithm-groups}
\end{figure*}
\subsection{Experimental Results}
We performed experiments for $\B{} \in \Set{2,4,8,16,32,64,96}$.
Our set of algorithms contained
(1)~\nodecentered{} in \num{15} configurations: with threshold $\NCthresh \in \Set{0,0.2,0.5}$ and vertex-ratings
\NCmax{}, \NCavg{}, \NCmed{}, \NCsum{}, and \NCbsum{};
(2)~\GreedyIt{} with and without \Swaps{}; (3)~\GPAIt{} with and without \Swaps{} and additionally with the postprocessing step \ROMA{} ($\lROMA{}=4$) after each iteration,
(4)~\BlossomIt{}, as well as (5)~\BGreedyExt{} and (6)~\kEC{}.
When \Swaps{} were used, they were either applied after each iteration (\LocalSwaps{}) or
once after all iterations finished (\GlobalSwaps{}).

Intuitively, we would expect that it becomes ``easier'' for the algorithms to
add high-demand edges to one of the matchings as $\B$ increases and, thus, that
all algorithms should return an almost equally good solution when $\B = 96$.
This is also confirmed by our results.
Still we can show interesting differences between the algorithms that we will
describe in this section.
We proceed as follows:
We first look at the behavior of similar or the same algorithm with different
configurations, and then compare it to other algorithms using only the best
variant.
Relative solution weights are expressed as a fraction of the optimum (\OPT{})
or, if the optimum is unknown, the best that \emph{any} algorithm has found
(\BEST{}).
Note that all plots use a logarithmic axis for $\B$.

\subsubsection*{\bfseries\nodecentered{}} %
We first consider for each set of instances the relative weights and
mean running times for \nodecentered{} with thresholds $0$, $0.2$, and $0.5$,
respectively, for the five different aggregation functions.
As the threshold effectively limits the number of edges colored in the first
phase and the aggregation function does not play a role in the second phase, we
observe as expected that the differences in quality when using different
aggregation functions become smaller the larger the threshold $\NCthresh$.
In general, \NCmed{} led to worse performance than the other aggregation
methods, especially on the \Instance{Facebook} instance set with no threshold
($\NCthresh = 0$), where it achieved, \eg, for $\B{}=4$ a solution quality of
only \SI{8}{\percent} on \Instance{clusterC-racks}.
This behavior can be explained by the strongly biased demand distribution.
\NCsum{} results in a higher rating of vertices with many (low-demand) edges,
whereas \NCavg{} also takes a vertex's degree into account, which is
however detrimental for small $k$ and a skewed demand distribution.
\NCmax{} can be led astray if vertices have a single edge with very high demand,
but many others with low demand, which resulted in bad performance especially on
the \Instance{pFabric} instances.
Overall, \NCbsum{} showed the \emph{best and most stable performance} in most
cases, see also \Figure{nc-all-boxplot-b4}.

We observe that large thresholds incur a larger time cost, as more edges are left
unprocessed in the first phase and need to be reconsidered in the second
phase.
As an example, \Figure{best-nc-facebook-timeplot} shows the running times
for the \Instance{Facebook}
instances with \NCbsum{}
as aggregation function for the different thresholds;
the behavior on the other instances is similar.
In the worst case, the variants where $\NCthresh > 0$ ran more than twice as
long as without threshold (\SI{125}{\second} vs.\ \SI{260}{\second} (\SI{268}{\second}) for $\B=4$
on \Instance{clusterB-ips} with $\NCthresh=0.2$ ($\NCthresh=0.5$)).
In the geometric mean over all %
instances,
$\NCthresh=0.2$ and $\NCthresh=0.5$
led to a slowdown by a factor between \num{1.5} and \num{2} in comparison to
setting no threshold. %
Looking at the result quality (\Figure{best-nc-facebook-weightplot}), we see
that thresholds are effective in avoiding overly greedy matching, as intended.
The quality differences between $\NCthresh = 0$ and $\NCthresh=0.2$ are ``only''
\SIrange{4}{2}{\percent} %
for \Instance{Facebook} instances, \eg,
which however corresponds to an average absolute gain or loss of \num{1} to
\num{5} billion due to the large absolute values.
Setting $\NCthresh=0.5$ increases the quality only marginally, but comes with
an increased running time, which is why we consider
\emph{\NCbsum{} and
$\NCthresh{}=0.2$ as the best configuration for \nodecentered{}}.

\begin{figure*}[tb]
\def\TextHeight{3.6cm}
\centering
\begin{tikzpicture}[inner sep=0pt]
\matrix (matrix) [matrix of nodes,nodes={inner xsep=3mm,text height=\TextHeight},inner sep=0mm] {%
\includegraphics[height=\TextHeight]{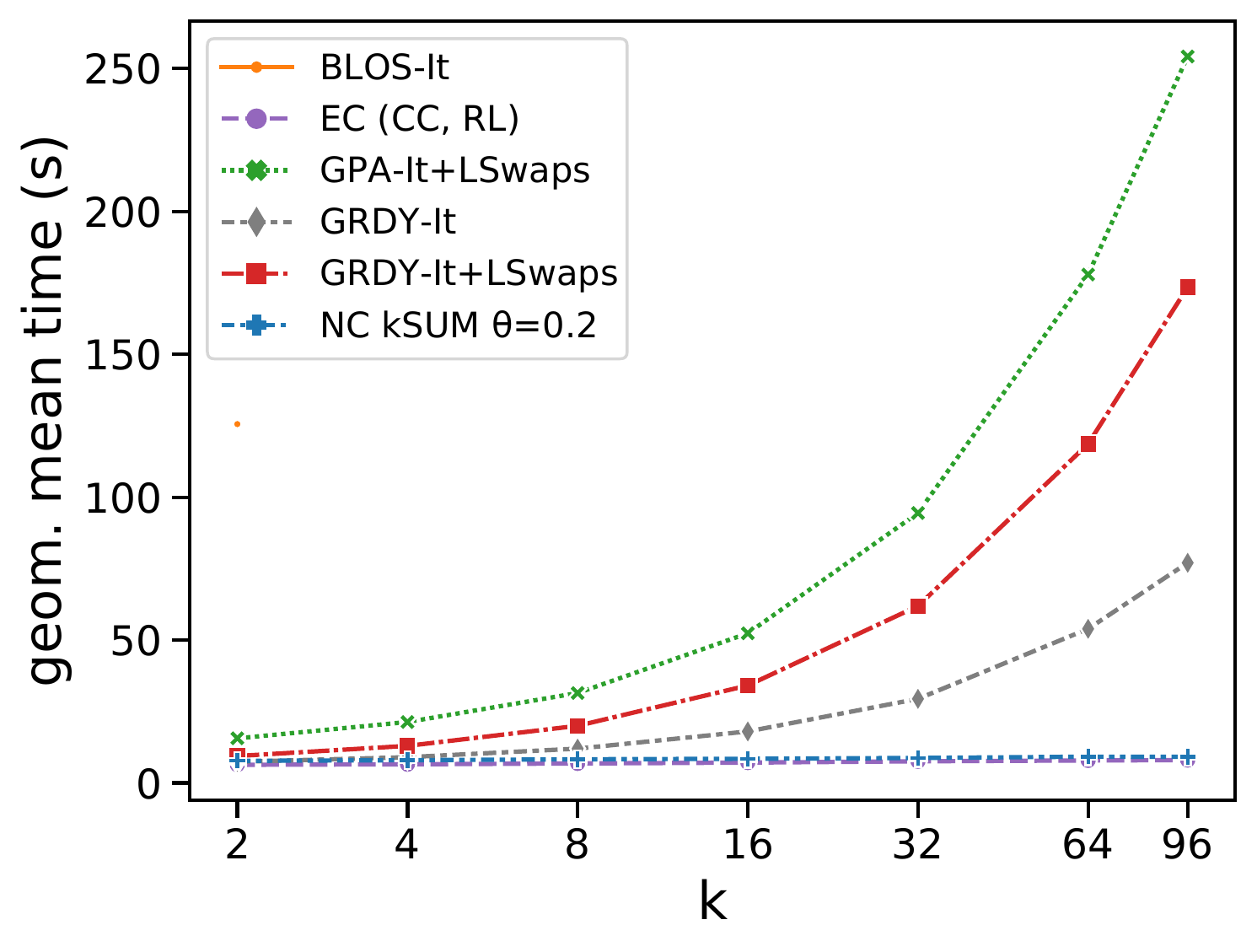}%
\phantomsubcaption\label{fig:best-facebook-timeplot}
&
\includegraphics[height=\TextHeight]{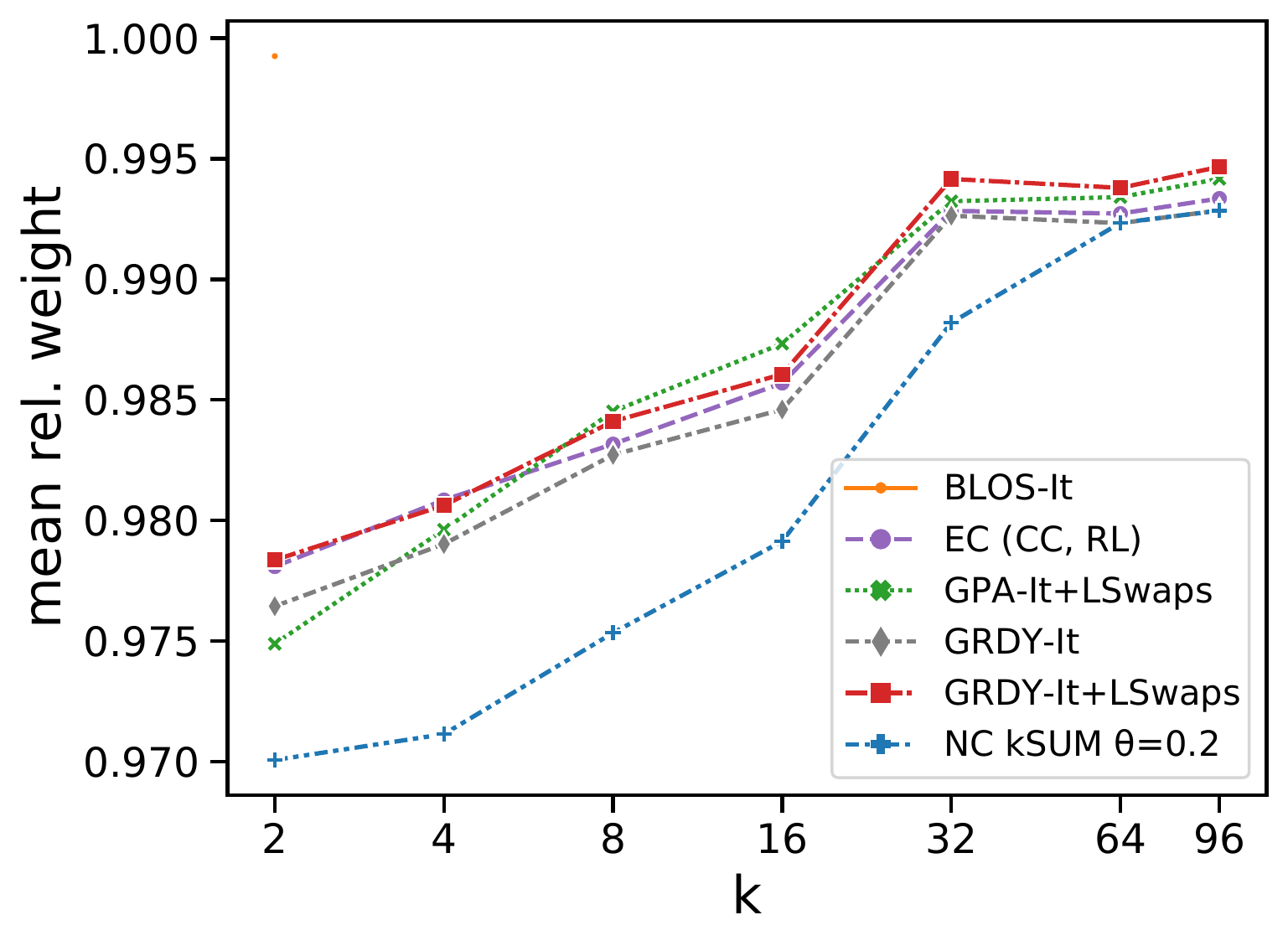}%
\phantomsubcaption\label{fig:best-facebook-weightplot}
&
\includegraphics[height=\TextHeight]{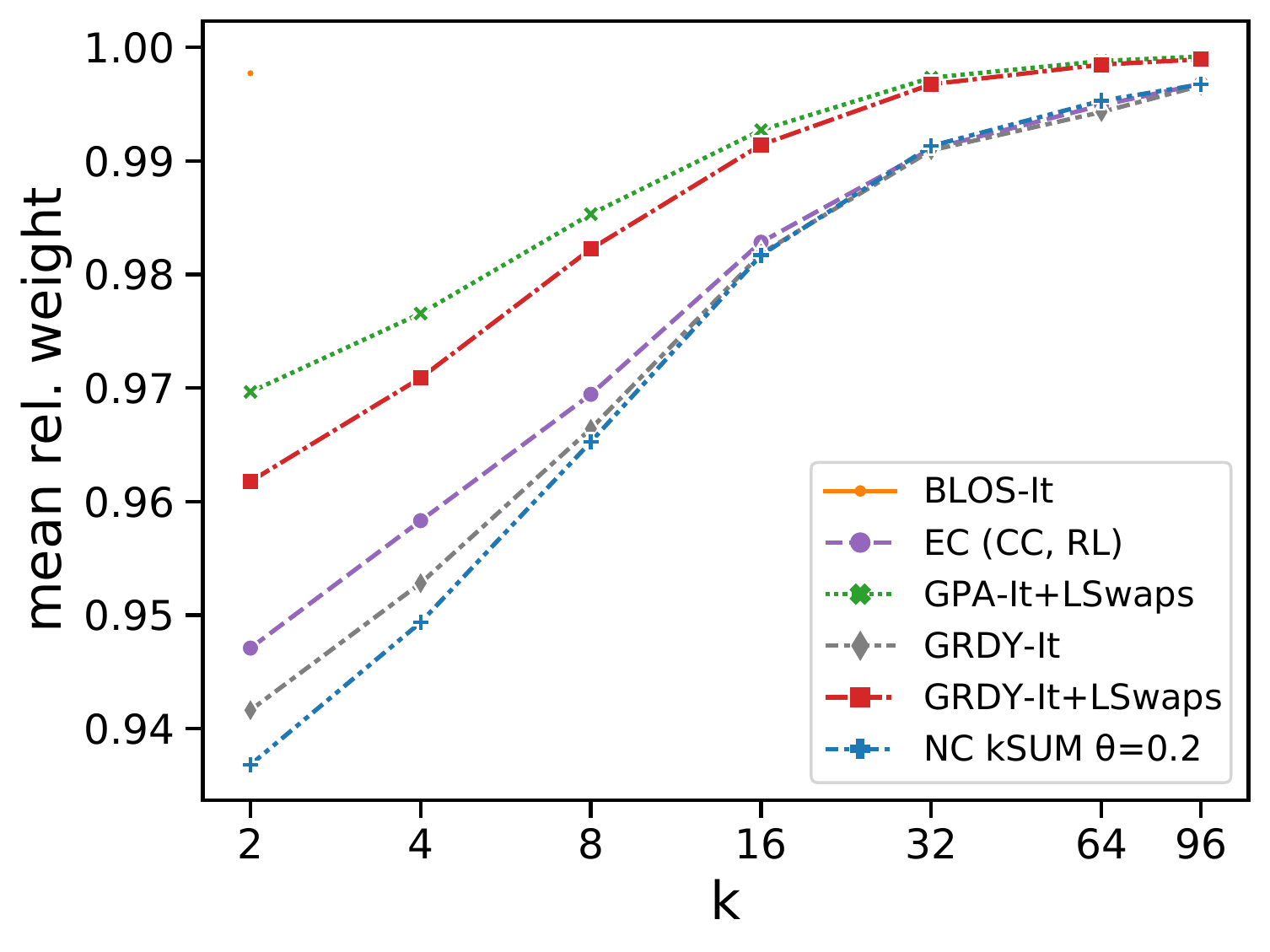}%
\phantomsubcaption\label{fig:best-all-weightplot}
\\
\includegraphics[height=\TextHeight]{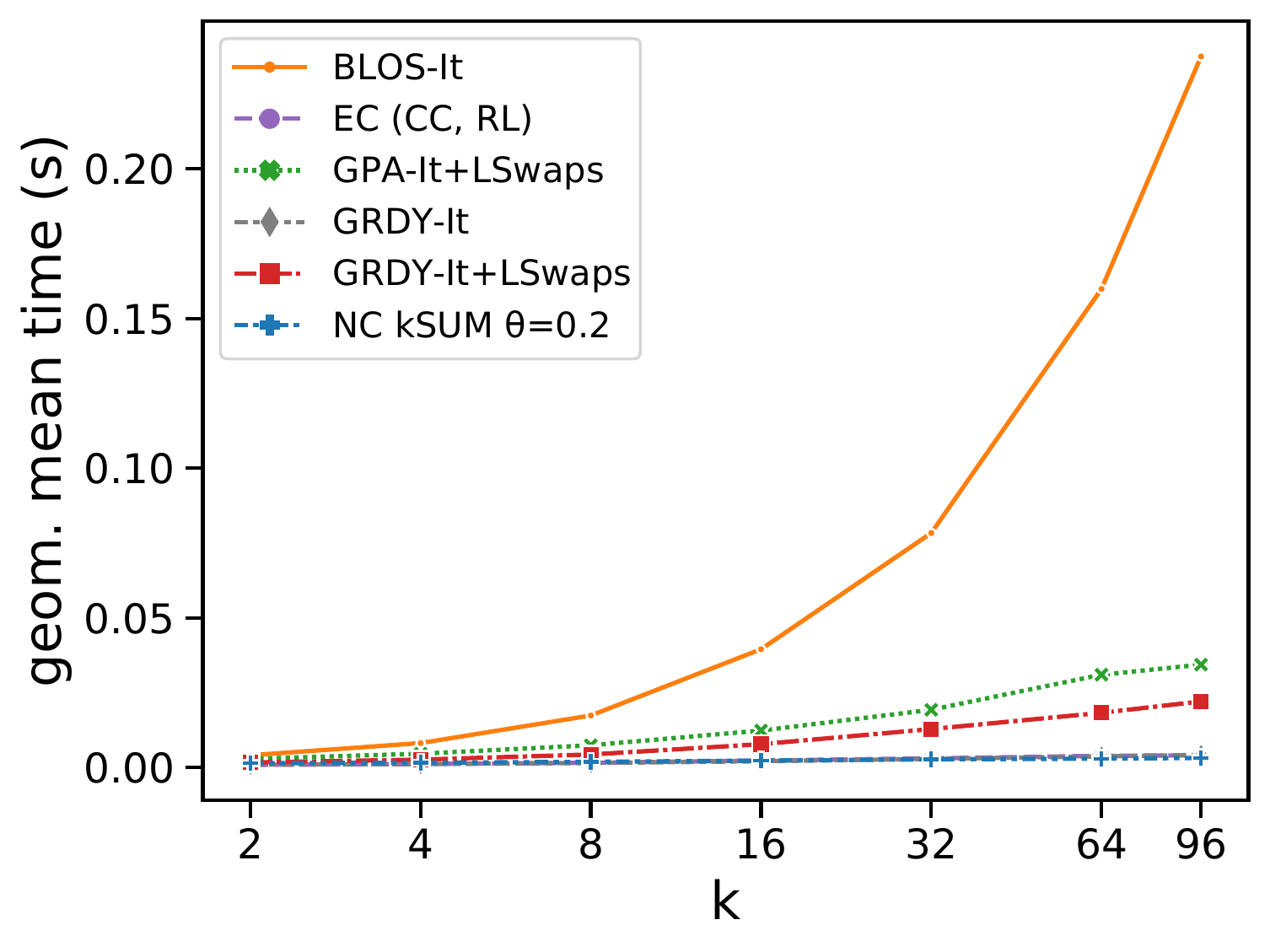}%
\phantomsubcaption\label{fig:best-hpc-pfab-timeplot}
&
\includegraphics[height=\TextHeight]{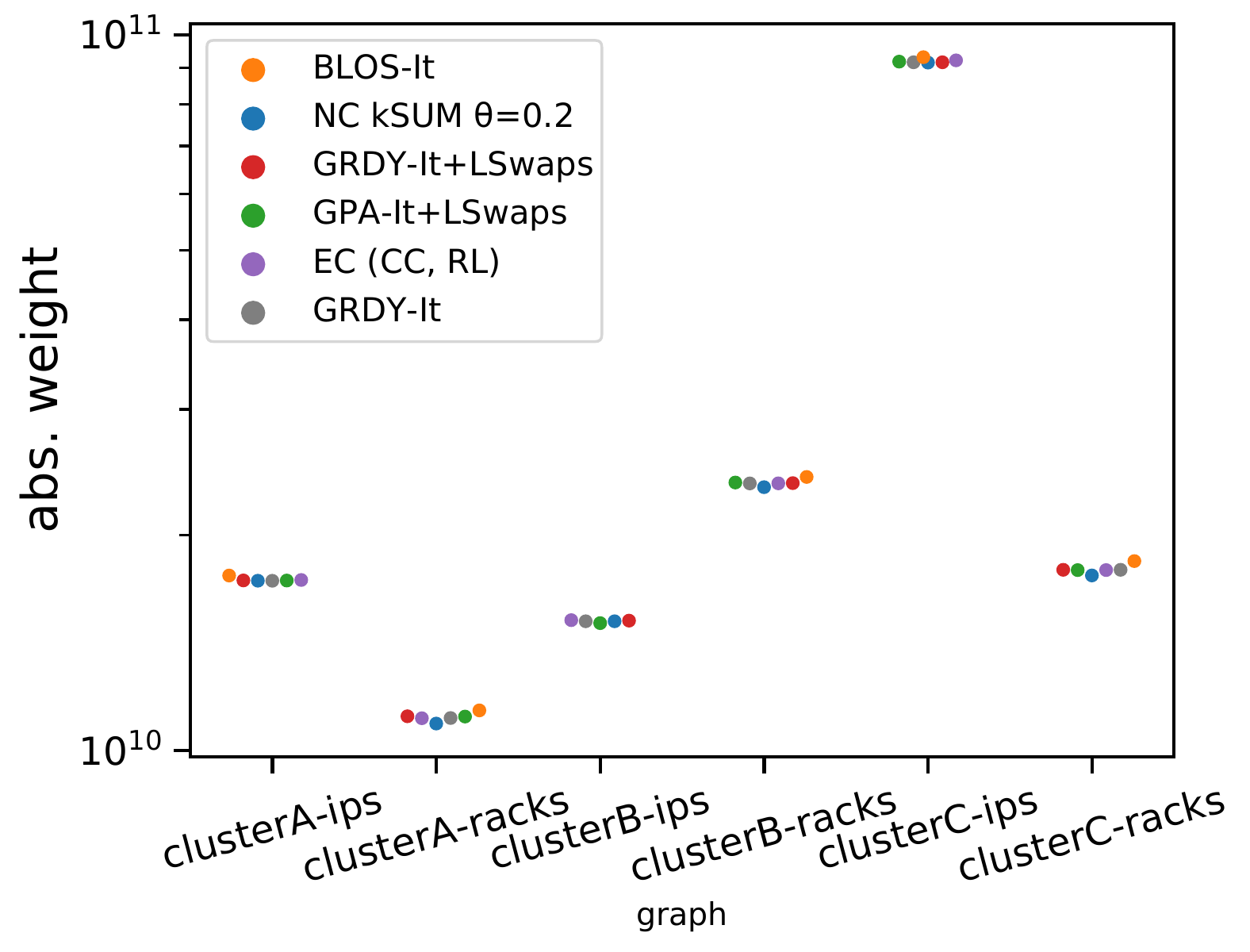}%
\phantomsubcaption\label{fig:best-facebook-instance-weight-b4-abs}
&
\includegraphics[height=\TextHeight]{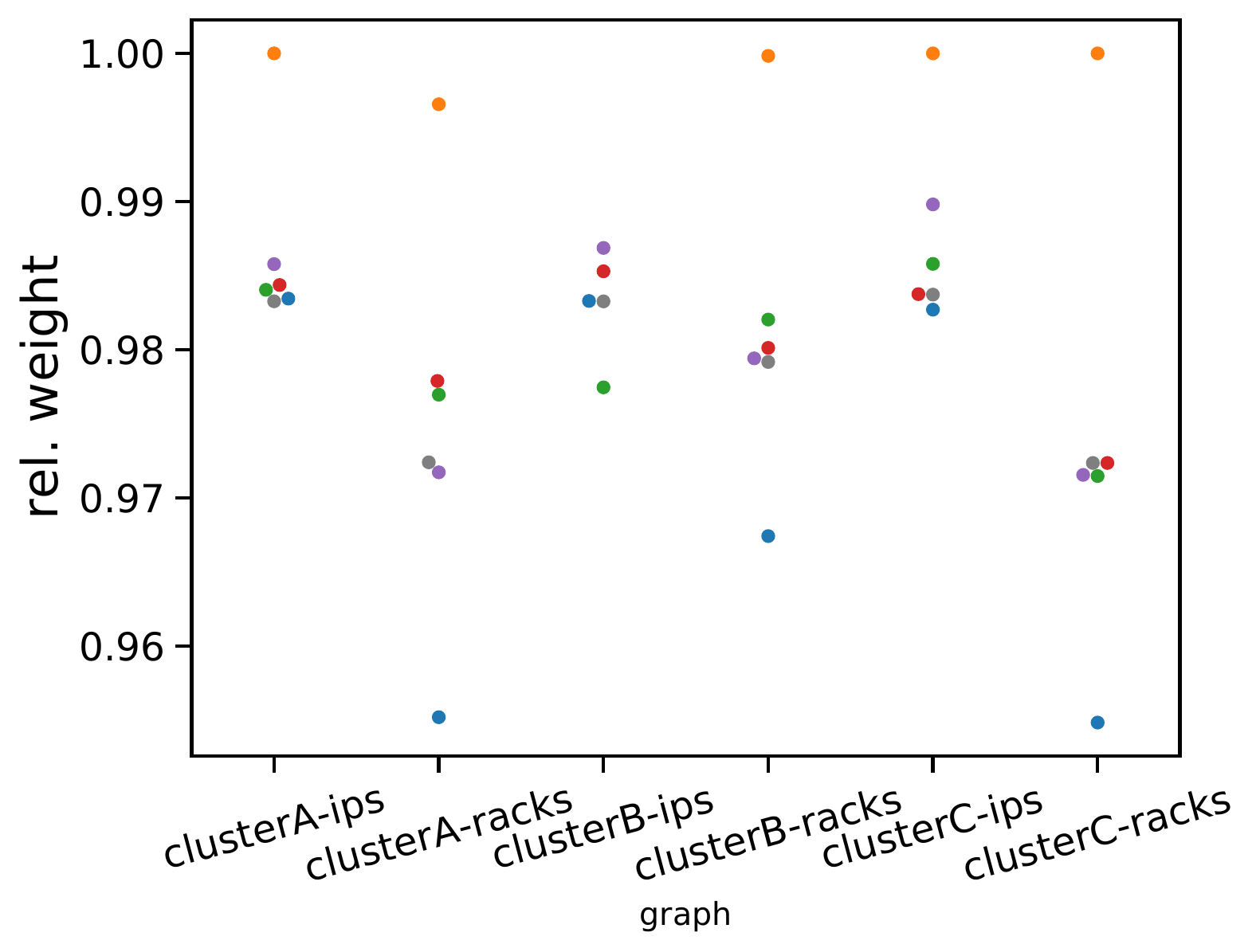}%
\phantomsubcaption\label{fig:best-facebook-instance-weight-b4-rel}
\\
};
\def\RefAnchor{north}
\node[anchor=\RefAnchor,inner sep=0pt,text height=2ex,xshift=1ex] at (matrix-1-1.north west) {(\subref*{fig:best-facebook-timeplot})};
\node[anchor=\RefAnchor,inner sep=0pt,text height=2ex,xshift=1ex] at (matrix-1-2.north west) {(\subref*{fig:best-facebook-weightplot})};
\node[anchor=\RefAnchor,inner sep=0pt,text height=2ex,xshift=1ex] at (matrix-1-3.north west) {(\subref*{fig:best-all-weightplot})};
\node[anchor=\RefAnchor,inner sep=0pt,text height=2ex,xshift=1ex] at (matrix-2-1.north west) {(\subref*{fig:best-hpc-pfab-timeplot})};
\node[anchor=\RefAnchor,inner sep=0pt,text height=2ex,xshift=1ex] at (matrix-2-2.north west) {(\subref*{fig:best-facebook-instance-weight-b4-abs})};
\node[anchor=\RefAnchor,inner sep=0pt,text height=2ex,xshift=1ex] at (matrix-2-3.north west) {(\subref*{fig:best-facebook-instance-weight-b4-rel})};
\end{tikzpicture}
\caption{%
Mean running time (\subref{fig:best-facebook-timeplot})
and result quality (\subref{fig:best-facebook-weightplot})
of the best algorithms on \Instance{Facebook},
mean result quality on all \num{88} instances (\subref{fig:best-all-weightplot}),
and
mean running time on \Instance{HPC} and \Instance{pFabric} (\subref{fig:best-hpc-pfab-timeplot}).
(\subref{fig:best-facebook-instance-weight-b4-abs}, \subref{fig:best-facebook-instance-weight-b4-rel}):
Per-instance absolute and relative weights %
on \Instance{Facebook}.
}%
\FigLabel{best-comparison}
\end{figure*}

\subsubsection*{\bfseries \GPAIt{}, \GreedyIt{}, \BGreedyExt{}}
As an example, \Figure{greedy-gpa-all-boxplot-b4} shows the result quality
and running time for $\B=4$.
We can observe a boost in quality for \GPAIt{} when activating either
\LocalSwaps{} or \ROMA{}, but no improvement with \GlobalSwaps{} (omitted in the plot).
Comparing \LocalSwaps{} and \ROMA{}, we obtain almost equal result quality
at distinctly faster speed with \LocalSwaps{}.
The running time with \GlobalSwaps{} is similar to plain \GPAIt{} and faster
than with \LocalSwaps{} by a factor of two with $\B{}=96$.
As \GlobalSwaps{} has almost equal quality as \GPAIt{} this suggests that no or only very few swaps were performed.
Given the trade-off between quality and running time, %
we consider \emph{\GPAIt{} with \LocalSwaps{}} to be the better option, which we
will use in our further analysis.

Similarly, we evaluated \GreedyIt{} with local and global
\Swaps{}, compared to a base version without swaps.
Again, we observe barely any improvement in quality by \GlobalSwaps{}, yet
\LocalSwaps{} consistently yields results with better quality,
at the expense of an increased running time. %
To consider both ends of the result quality vs.\ running time tradeoff, we
include both \GreedyIt{} alone as well as \GreedyIt{} with
\LocalSwaps{} in our further analysis.

\BGreedyExt{} was inferior to \GreedyIt{} on all sets of instances both with
respect to running time and solution quality and is therefore not considered
further.

\subsubsection*{\bfseries \kEC{}}
\Figure{kec-all-boxplot-b4} shows the result quality and running time for \kEC{}
with different combinations of flags for $\B=4$ and all instances.
As expected, \kECcc{} (common color) decreases the running time, here by over
\SI{30}{\percent}, as fan construction and rotation are no longer required in
many cases.
It increases the result quality distinctly as it can also color an edge if
the last neighbor in the fan does not have a free color.
To the contrary, \kEClc{} (lightest
color) leads to a clearly visible decline without \kECcc{} and in general to a
slight increase in running time due to the additional maintenance of color
weights.
\kECrl{} (rotate long) marginally improves result quality and has
a negligible effect on the running time,
whereas \kEClf{} leads to a slowdown in general and slightly better
quality only if \kECcc{} is not set.
We thus consider \kECcc{} and \kECrl{} as the best parameters for
\kEC{}.

\subsection{Overall Running Times and Result Quality}
Given our choice of representatives for each algorithm, we analyze these
representatives regarding their running time and result quality in detail on
the instance set \Instance{Facebook} and
only give a summary about the others.
\emph{We do not discuss the other instance sets in detail any further,
as the algorithms perform very consistently on all of them.
}
Note that a given algorithm is only represented for a given $\B{}$ if that
algorithm finished on \emph{all} instances of a set within our \SI{4}{\hour}
time limit.

Looking purely at the running time complexities (cf.~\Table{complexities}), one
might expect to see \GPAIt{} and \GreedyIt{} behaving similarly to \nodecentered{}.
The former two have a slightly larger preprocessing time,
yet afterwards all perform $\bigO(\B m)$ work (with and without \LocalSwaps{}).
\kEC{}, on the other hand, has both large preprocessing time and performs
$\bigO(\B n^2)$ work, so it could be expected to be the slowest.
However, \Figures{best-facebook-timeplot}{best-hpc-pfab-timeplot} paint a
vastly different picture, as \nodecentered{} and \kEC{} compute the disjoint
matchings significantly faster than the \AlgName{*-It} algorithms.
This can be observed consistently on all instances.
For the \Instance{Facebook} instances and $\B{}=4$, \kEC{} achieves
in the geometric mean
a speedup of \num{2} and \num{3.2} over \GreedyIt{} with \LocalSwaps{}
and \GPAIt{}, respectively, and for $\B{}=96$ the speedups increase to
\num{21.6} and \num{31.7}.
The running time of \GreedyIt{} without \LocalSwaps{} is less than the time for
\GreedyIt{} with \LocalSwaps{}, but larger than for \kEC{}.
\nodecentered{} is equally fast as \kEC{}.
Over all instances and values of $\B$, \kEC{} is faster than \GreedyIt{} and
\GPAIt{} with \LocalSwaps{} by a factor of \num{1.8} to \num{7.3} and \num{2.6}
to \num{9.5}, respectively.
\BlossomIt{} terminated on all instances in \Instance{HPC} and
\Instance{pFabric}, but was \num{4.5} to \num{57.9} times slower than \kEC{}.
The speedup by \kEC{} over the plain variant of \GreedyIt{} is less
pronounced, but still between \num{1.2} and \num{9.6} for \Instance{Facebook} and
up to \num{2}, \eg, on \Instance{Florida}.
The reason that \GreedyIt{} without \LocalSwaps{} is faster than \GreedyIt{}
with \LocalSwaps{} is that \LocalSwaps{} prevents the algorithm from
efficiently cutting down the list of edges to process in the next each
iteration:
As \LocalSwaps{} changes the matching, the non-matching edges need either be
sorted after each iteration or all edges are processed in each iteration,
causing $\Theta(\B)$ work per edge.
\nodecentered{} and \kEC{}, on the other hand, operate more locally.
\nodecentered{} scans each edge at most three times and
\emph{only if} both end vertices have been matched less than $\B$ times so far,
it compares two lists of Boolean arrays of length $\B$ to determine a common
free color.
Thus, the work per edge is often just constant.
The situation is similar for \kEC{}.

Regarding quality
(\FiguresSub{best-facebook-weightplot}{best-all-weightplot}), for
\Instance{Facebook} instances and $\B{} \leq 4$, \kEC{} and \GreedyIt{} with
\LocalSwaps{} perform best.
For $\B \geq 8$, \kEC{} stays slightly behind \GreedyIt{} and \GPAIt{} with \LocalSwaps{}
by less than \SI{0.01}{\percent}
(regarding the mean of weights relative to \BEST{}).
The mean performance of \nodecentered{} always remains within \SI{1}{\percent}
of \BEST{}.
Across \emph{all instances}, \GPAIt{} with \LocalSwaps{} performed best, with a mean
relative weight of at least \SI{97}{\percent} of \BEST{},
closely followed by \GreedyIt{} with \LocalSwaps{} and \kEC{} with at least \SI{96}{\percent}
and \SI{94.7}{\percent} on average.
\nodecentered{} performed worst, however still within \SI{93.7}{\percent}
of \BEST{} on average for $\B=2$ and \SI{99.7}{\percent} for $\B=96$.
If we look at the \emph{worst performance} per algorithm across all instances,
we observe a quality ratio of at least \SI{95}{\percent} for \GPAIt{},
\SI{93}{\percent} for \GreedyIt{} with \LocalSwaps{},
\SI{87}{\percent} (\SI{90}{\percent} for $\B \geq 4$) for \GreedyIt{} and \kEC{},
and \SI{76}{\percent} (\SI{88}{\percent} for $\B \geq 4$) for \nodecentered{},
see also \Figure{best-all-boxplot-b4}.

\Figures{best-facebook-instance-weight-b4-abs}{best-facebook-instance-weight-b4-rel}
show absolute and relative per-instance weight comparisons for $\B{}=4$ on the
\Instance{Facebook} instances.
We can clearly observe that \nodecentered{} struggles with the rack-level instances \Instance{clusterA-racks}, \Instance{clusterB-racks}, and \Instance{clusterC-racks}.
\kEC{} is second-best after \BlossomIt{} on the IP-level instances.

\BlossomIt{} finished within the \SI{4}{\hour} time limit on \SI{97}{\percent} of all experiments.
Its asymptotic running time has an additive factor of $\bigO(\B n m)$ as
compared to $\bigO(\B m)$ for the other algorithms and this is confirmed by our
experiments: It is the slowest on all graphs.
However, it always achieves the best quality results and for all graphs where
the ILP terminated, \BlossomIt{} was within \SI{99}{\percent} of the result quality of
the optimum. %
However, for $\B \ge 64$ the faster algorithms achieved almost the same result
quality. Thus, \BlossomIt{} is a good choice only for small values of $\B$ and
in settings where running time is not a limiting factor.

The ILP completed on all \Instance{HPC} instances for $\B{} \in \Range{2}{16}$,
all \Instance{pFabric} instances for $\B{} \in \Range{2}{8}$,
most \Instance{RMAT} instances with $n=2^{10}\dots 2^{12}$ for $\B \in \Range{2}{16}$
(\Instance{rmat\_er}, \Instance{rmat\_g\_12} only for $\B \in \Set{2,4}$),
as well as on three \Instance{Facebook} instances
(\Instance{clusterA-racks} and \Instance{clusterB-racks} for $\B{} \in \Set{2,
4}$ and \Instance{clusterC-racks} only for $\B = 2$).

The order of the algorithms with respect to running time and result quality is
consistent on all instances except for \Instance{pFabric}, where \kEC{} on
average finds larger solutions than \GreedyIt{} with \LocalSwaps{} and
partially also \GPAIt{} for all values of $\B$.

\emph{Overall we conclude that for medium and large values of $\B$, \kEC{} with
\kECcc{} and \kECrl{} enabled is the best-performing algorithm.}
Unlike the running time of the \AlgName{*-It} algorithms, its running time barely increases with
$\B$ and its quality score is \emph{on average} within \SI{95}{\percent} or more of the best
algorithm within the \SI{4}{\hour} time limit, and \SI{98}{\percent} or better
for $\B \geq 32$.
\emph{It is also one of the best algorithms for small values of $\B$ on the
\Instance{Facebook} and \Instance{pFabric} instances}.
\emph{On the other instances, \GPAIt{}  with \LocalSwaps{} is a good choice for small
values of $\B$,
as its quality is always within \SI{95}{\percent} of the best algorithm while its
running time is still moderate; if running time is not of concern, \BlossomIt{}
is a better choice.}

\section{Future Work}\SectLabel{conclusion}
There remain several interesting avenues for future work.
In particular, it would be interesting to further explore 
the power of randomized algorithms. 
The only randomized algorithm we analyzed is \GPAIt{} with \ROMA{}, but it did not show the strongest performance.
On the practical front, it will be interesting to deploy and experiment with our algorithms in a small 
datacenter network using optical circuit switches.

The authors have provided public access to their code at \url{https://doi.org/10.5281/zenodo.5851268}.

\balance
\bibliography{paper}
\end{document}